\crefname{algocf}{Algorithm}{Algorithms}
\Crefname{algocf}{Algorithm}{Algorithms}
\newcounter{caseCounter}
\theoremstyle{remark}
\newtheorem{case}[caseCounter]{Case}
\crefname{case}{case}{cases}
\Crefname{case}{Case}{Cases}
\newcounter{propertyCounter}
\theoremstyle{remark}
\crefname{prop}{property}{properties}
\Crefname{prop}{Property}{Properties}
\newcounter{obsCounter}
\theoremstyle{remark}
\newtheorem{obs}[obsCounter]{Observation}
\crefname{obs}{Observation}{Observations}
\Crefname{obs}{Observation}{Observations}
\newcommand{\UNE}{{\textsc{une}}}
\newcommand{\UW}{{\textsc{uw}}}
\newcommand{\USE}{{\textsc{use}}}
\newcommand{\N}{{\textsc{n}}}
\newcommand{\NW}{{\textsc{nw}}}
\newcommand{\SW}{{\textsc{sw}}}
\renewcommand{\S}{{\textsc{s}}}
\newcommand{\SE}{{\textsc{se}}}
\newcommand{\NE}{{\textsc{ne}}}
\newcommand{\DNW}{{\textsc{dnw}}}
\newcommand{\DSW}{{\textsc{dsw}}}
\newcommand{\DE}{{\textsc{de}}}
\newcommand{\X}{{\textsc{x}}}
\newcommand{\tiled}{\mathcal{T}}
\newcommand{\bb}{{\mathfrak{B}}}
\newcommand{\bc}{{\mathfrak{C}}}
\renewcommand{\O}{\mathcal{O}}
\title{Efficient Shape Formation by 3D Hybrid Programmable Matter: An Algorithm for Low Diameter Intermediate Structures}
\titlerunning{Efficient Shape Formation by 3D Hybrid Programmable Matter} 
\author{Kristian {Hinnenthal}}{Paderborn University, Germany \and \url{https://www.uni-paderborn.de/person/32229} }{kristian.hinnenthal@hsbi.de}{https://orcid.org/0000-0001-9464-295X}{}
\author{David {Liedtke}}{Paderborn University, Germany \and \url{https://www.uni-paderborn.de/person/55557} }{liedtke@mail.upb.de}{https://orcid.org/0000-0002-4066-0033}{}
\author{Christian {Scheideler}}{Paderborn University, Germany \and \url{https://cs.uni-paderborn.de/ti/personal/prof-dr-rer-nat-christian-scheideler} }{scheideler@upb.de}{https://orcid.org/0000-0002-5278-528X}{}
\authorrunning{K. Hinnenthal, D. Liedtke, C. Scheideler} 
\keywords{Programmable Matter, Shape Formation, 3D Model, Finite Automaton} 
\begin{document}

  \maketitle

  \begin{abstract}
    This paper considers the shape formation problem within the 3D hybrid model, where a single agent with a strictly limited viewing range and the computational capacity of a deterministic finite automaton manipulates passive tiles through pick-up, movement, and placement actions. 
    The goal is to reconfigure a set of tiles into a specific shape termed an \emph{icicle}. 
    The icicle, identified as a dense, hole-free structure, is strategically chosen to function as an intermediate shape for more intricate shape formation tasks.
    It is designed for easy exploration by a finite state agent, enabling the identification of tiles that can be lifted without breaking connectivity.
    Compared to the line shape, the icicle presents distinct advantages, including a reduced diameter and the presence of multiple removable tiles.
    We propose an algorithm that transforms an arbitrary initially connected tile structure into an icicle in $\mathcal{O}(n^3)$ steps, matching the runtime of the line formation algorithm from prior work.
    Our theoretical contribution is accompanied by an extensive experimental analysis, indicating that our algorithm decreases the diameter of tile structures on average.
  \end{abstract}

\section{Introduction}
\label{sec:intro}

Advancements in molecular engineering have led to the development of a series of computing DNA robots designed for nano-scale operations.
These robots are intended to perform simple tasks such as transporting cargo, facilitating communication, navigating surfaces of membranes, and pathfinding~\cite{Anupama2017DNARobotCargo, Akter2022DNARobotCargoCollective, Li2021DNARobotWalkMembrane, Chao2019DNARobotPathfinding}.
Envisioning the future of nanotechnology, we anticipate a scenario where a collective of computing particles collaboratively acts as programmable matter -- a homogeneous material capable of altering its shape and physical properties programmably.
There are numerous potential applications:
For environmental remediation, particles may construct nanoscale filtration systems to remove pollutants from air or water. 
They may also be deployed within the human body to construct intricate structures for targeted drug delivery, perform nanoscale surgeries, or repair damaged tissues at a cellular level. 
Additionally, they could assemble nanoscale circuits and components, enabling the development of more efficient and compact electronic devices.
Each of those scenarios is an application of the \emph{shape formation problem}, which is the subject of this paper.

Over the past few decades, various models of programmable matter have emerged, primarily distinguished by the activity of entities within them.
Passive systems consist of entities (called tiles) that undergo movement and bonding exclusively in response to external stimuli, such as current or light, or based on their inherent structural properties, such as specific glues on the surfaces of tiles. 
Examples of these passive systems include the DNA tile assembly models aTAM, kTAM, and 2HAM, which are extensively discussed in the survey~\cite{Patitz2014TileAssemblySurvey}, as well as population protocols \cite{Angluin2006PopulationProtocols}, and slime molds \cite{Chen2023SlimeMoldSurvey}.
In contrast, active systems consist of entities (called particles, agents or robots) that independently perform computation and movement to accomplish tasks. 
Notable examples encompass the Amoebot model~\cite{Derakhshandeh2014Amoebot}, modular self-reconfigurable robots~\cite{Tan2020ModularRobots,Tucci2018ModularRobots}, the nubot model~\cite{Woods2013NuBotModel}, metamorphic robots~\cite{Chirikjian1994MetamorphicRobots,Walter2004MetamorphicRobots}, and swarm robotics~\cite{Werfel2014SwarmRobots}.

While fabricating computing DNA robots remains challenging, producing simple passive tiles from folded DNA strands is efficient and scalable~\cite{HeuerJungemann2019DNATileSurvey}.
The hybrid model of programmable matter~\cite{Gmyr2018Hybrid2DShapeRecognition, Gmyr2020Hybrid2DShapeFormation, Hinnenthal2020Hybrid3D, Nokhanji2022Hybrid2DDynamicLine, Kostitsyna2023Hybrid3DCoating} offers a compromise between feasibility and utility.
This model involves a small number of active agents with the computational capabilities of deterministic finite automata together with a large set of passive building blocks, called tiles.
Agents can manipulate the structure of tiles by picking up a tile, moving it, and placing it at some spot.
A key advantage of the hybrid approach lies in the reusability of agents upon completing a task, where in purely active systems, particles become part of the formed structure.

In this paper, we address the shape formation problem within the 3D hybrid model, with the ultimate goal of transforming an arbitrary initial arrangement of tiles into a predefined shape.
We consider tiles in the shape of rhombic dodecahedra, i.e., polyhedra featuring 12 congruent rhombic faces, positioned at nodes within the adjacency graph of face-centered cubic (FCC) stacked spheres (see \cref{subfig:icicle}).
Unlike rectangular tiles, the rhombic dodecahedron presents a distinct advantage:  it allows an agent to orbit around a tile without risking connectivity.
This property is particularly valuable in liquid or low gravity environments, where it prevents unintended separation between the agents and the tiles.

Achieving universal 3D shape formation faces a key challenge: identifying tiles that can be lifted without disconnecting the tile structure (referred to as \emph{removable} tiles).
Even if such tiles exist, locating them requires exploring the tile structure, demanding $\Omega(D \log(\Delta))$ memory bits for graphs with a diameter $D$ and degree $\Delta$ \cite{Fraigniaud2004GraphExploration}.
When limited to constant memory, navigating plane labyrinths requires two placeable markers (pebbles)~\cite{Hoffmann1982OnePebbleDoesNotSuffice,Blum1978OnThePowerOfTheCompass}.
In the 2D context, finding removable tiles is impossible without prior modification of the tile structure, as discussed in \cite{Gmyr2020Hybrid2DShapeFormation}.
In 3D, complexity increases significantly, with instances where any tile movement can locally disconnect the structure.
As discussed above, the agent is unable to verify whether this disconnection also occurs globally.
To address these challenges, we make the assumption that the agent carries a tile initially, using it to uncover removable tiles through successive tile movements. 
It is still entirely unclear whether otherwise a removable tile can be found in all 3D instances.
For that reason, our primary goal is to construct an intermediate structure that is easily navigable by constant-memory agents and allows the identification of removable tiles without relying on an initially carried~tile.

\subsection{Our Contribution} \label{subsec:contribution}

The intermediate structure we propose is termed an \emph{icicle}, characterized by a platform representing a parallelogram and downward-extending lines of tiles from the platform (see \cref{subfig:icicle}).
We~present a single-agent algorithm that transforms any initially connected tile structure into an icicle in $\O(n^3)$ steps, matching the efficiency of the line formation algorithm from prior work~\cite{Hinnenthal2020Hybrid3D}.
While both the icicle and the line enable agents without an initial tile to find removable tiles, the icicle presents distinct advantages.
In the best-case scenario, the diameter $D$ of an icicle can be as low as $\O(n^{\frac{1}{3}})$, whereas a line consistently maintains a diameter of $n$.
Furthermore, an icicle encompasses multiple removable tiles, which removes the necessity to traverse the intermediate shape completely to locate a removable tile.
Our paper includes comprehensive simulation results, indicating that, on average, our algorithm reduces the diameter of the tile structure.
In addition, the runtime observed in the simulations consistently falls below the bound established in our runtime analysis. 
Across all~simulations, the runtime remains well within the vicinity of $n^2$. 
It is noteworthy that we identified an edge case where the diameter could increase by a factor of $\mathcal{O}(n^{\frac{1}{3}})$, although we believe this to be the worst-case.

\subsection{Related Work} \label{subsec:related}

The 3D variant of the hybrid model was introduced in \cite{Hinnenthal2020Hybrid3D}, where the authors presented an algorithm capable of transforming any connected input configuration into a line in $\O(n^3)$ steps.
In \cite{Kostitsyna2023Hybrid3DCoating}, the authors address the coating problem, providing a solution that solves the problem in worst-case optimal $\O(n^2)$ steps.
They assume a single active agent that has access to a constant number of distinguishable tile types.

Significant progress has been made in recent years regarding the 2D version of the hybrid model.
For instance, in~\cite{Gmyr2020Hybrid2DShapeFormation}, the authors address the 2D shape formation problem, presenting algorithms for a single active agent that efficiently constructs line, block, and tree structures - each being hole-free structures with specific advantages and disadvantages — in worst-case optimal $\O(n^2)$ steps.
Another publication, \cite{Gmyr2018Hybrid2DShapeRecognition}, explores the recognition of parallelograms with a specific height-to-length ratio.
The most recent publication \cite{Nokhanji2022Hybrid2DDynamicLine} solves the problem of maintaining a line of tiles in presence of multiple agents and dynamic failures of the tiles.

Closely tied to the hybrid model is the well-established Amoebot model, where computing particles traverse an infinite triangular lattice through expansions and contractions.
In \cite{Derakhshandeh2015AmoebotShapeFormation}, the authors showcase the construction of simple shapes like hexagons or triangles within the Amoebot model.
Expanding on this work, \cite{Derakhshandeh2016AmoebotShapeFormation} introduces a universal shape formation algorithm capable of constructing an arbitrary input shape using a constant number of equilateral triangles, with the scale depending on the number of amoebots.
Notably, this work assumes common chirality, a sequential activation schedule, and randomization.
Subsequent improvements are presented in \cite{DiLuna2020AmoebotShapeFormation}, where a deterministic algorithm is introduced, enabling amoebots to form any Turing-computable shape without the need for common chirality or randomization.
In \cite{Kostitsyna2022AmoebotFaultTolerantSF}, the authors consider shape formation in the presence of a finite number of faults, where a fault resets an amoebot's memory.
They solve the hexagon formation problem, assuming the existence of a fault-free leader.
A recent extension of the Amoebot model, discussed in \cite{Padalkin2023CircuitShapeFormation}, considers joint movements of Amoebots.
The authors simulate various shape formation algorithms as a proof of concept.

In both \cite{Gmyr2020Hybrid2DShapeFormation} and this paper, shape formation algorithms are introduced that construct an intermediate shape, intended to serve as the foundation for more advanced shape formation algorithms.
A similar strategy is explored in \cite{Hurtado2015ReconfModularRobots}, where 2D lattice-based modular robots initially transform into a canonical shape before achieving the final desired shape.
An approach that does not rely on canonical intermediate structures is considered in \cite{Peters2023FastReconfiguration}.
The authors present primitives for the Amoebot model that establish shortest path trees within the amoebot structure and subsequently directly route amoebots to their target position.

The concept of shape formation is extensively studied in the field of modular robotics and metamorphic robots, often referred to as self-reconfiguration.
A comprehensive survey on this topic can be found in \cite{Thalamy2019ShapeFormationModularRobotsSurvey}.
In the field of swarm robotics, shape formation is often closely related to the problem of computing collision-free paths \cite{Wang2020SwarmRoboticsShapeFormation,Li2019SwarmRoboticsShapeFormation}.

\subsection{Model Definition} \label{subsec:model}

\begin{figure}[!t]
    \noindent
        \begin{subfigure}[b]{.4\linewidth} %
            \centering%
            \includegraphics[width=0.95\linewidth,page=1]{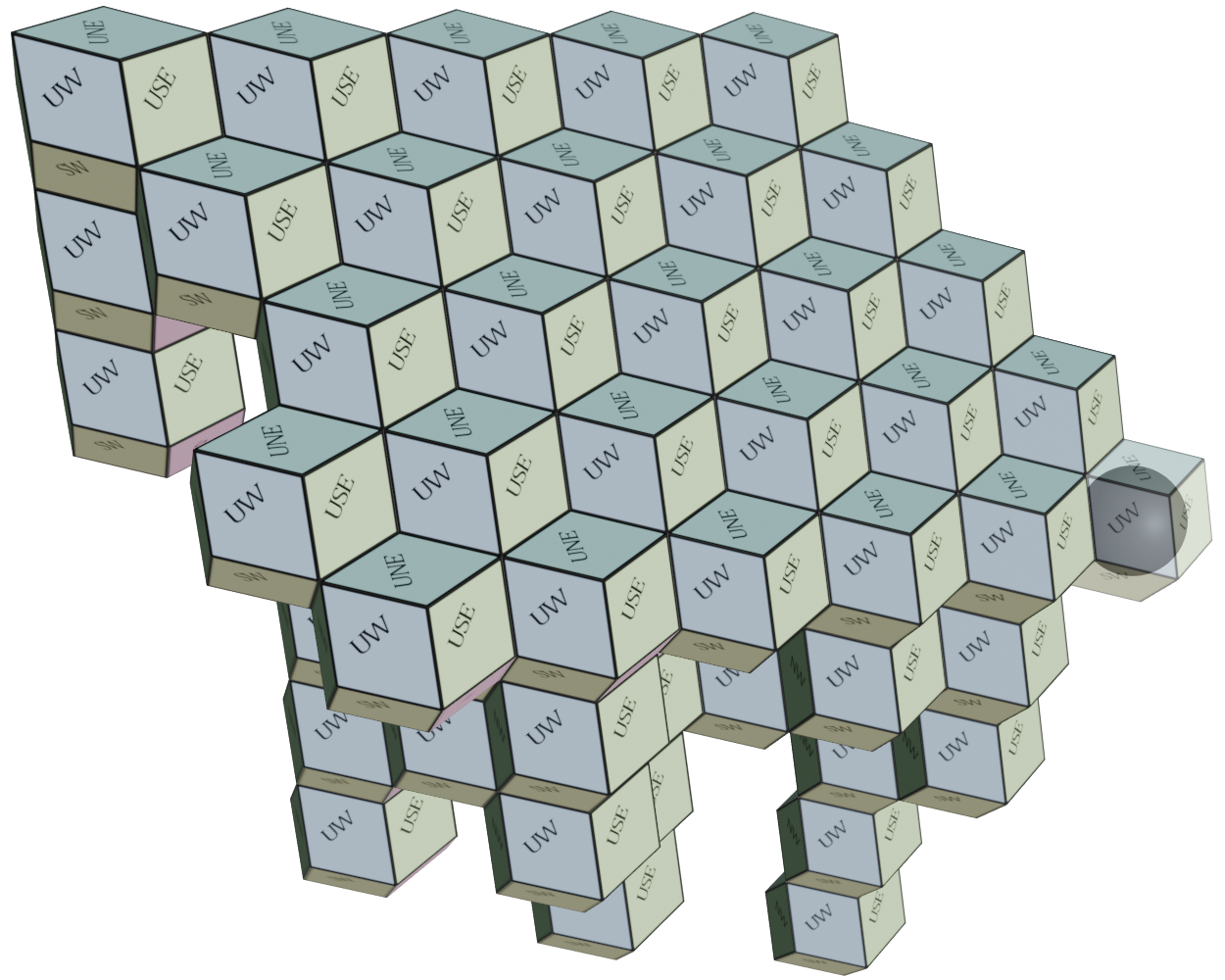}%
            \subcaption{}%
            \label{subfig:icicle}%
        \end{subfigure}%
        \begin{subfigure}[b]{0.192\linewidth}
            \centering%
            \includegraphics[width=0.95\linewidth,page=1]{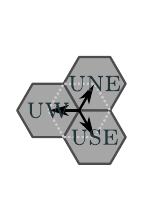}%
            \subcaption{}
            \label{subfig:compassDirections1}
        \end{subfigure}%
        \begin{subfigure}[b]{0.192\linewidth}
            \centering%
            \includegraphics[width=0.95\linewidth,page=2]{directionsAndAxes}%
            \subcaption{}
            \label{subfig:compassDirections2}
        \end{subfigure}
        \begin{subfigure}[b]{0.192\linewidth}
            \centering%
            \includegraphics[width=0.95\linewidth,page=3]{directionsAndAxes}%
            \subcaption{}
            \label{subfig:compassDirections3}
        \end{subfigure}%
    \caption{(a) An example configuration that has the shape of an icicle; the agent (depicted as a sphere) is positioned at a tiled node within the platform representing a parallelogram. (b--d) The twelve compass directions divided into upwards (b), plane (c) and downwards directions (d).}
    \label{fig:compassDirections}
\end{figure}

We consider a single active agent $r$ with limited sensing and computational power that operates on a finite set of passive \emph{tiles} positioned at nodes of some specific underlying graph $G$, which we define in the following.
Consider the close packing of equally sized spheres at each point of the infinite face-centered cubic lattice.
Let $G = (V,E)$ be the adjacency graph of spheres in that packing, and consider an embedding of $G$ in $\mathbb{R}^3$ in which all edges have equal length, e.g., the trivial embedding where the edge length equals the radius of the spheres. 
Cells in the dual graph of $G$ w.r.t.\ that embedding have the shape of rhombic dodecahedra, i.e., polyhedra with 12 congruent rhombic faces (see \cref{subfig:icicle}).
This is also the shape of every cell in the Voronoi tessellation of $G$, i.e., that shape completely tessellates 3D space.
Consider a finite set of tiles that have the shape of rhombic dodecahedra.
Tiles are passive, in the sense that they cannot perform any computation or movement on their own.
A node $v \in V$ is \emph{tiled}, if there is a passive tile positioned at $v$; otherwise node $v$ is \emph{empty}.
Each node can hold at most one tile and each tile is placed at at most one node at a time.
Each node in $V$ has precisely twelve neighbors whose relative positions are described by the twelve compass directions $\UNE,\UW,\USE,\N,\NW,\SW,\S,\SE,\NE,\DNW,\DSW$ and $\DE$ (see \cref{subfig:compassDirections1,subfig:compassDirections2,subfig:compassDirections3}).
Take note that $G$ contains infinitely many copies of the infinite triangular lattice, which serves as the underlying graph in the 2D variant.
This allows us to visually depict 3D examples as a stack of 2D hexagonal tiles, as shown in \cref{fig:compassDirections}.

A \emph{configuration} $C = (\tiled{},p)$ is the set $\tiled{}$ that contains all tiled nodes together with the agent's position $p$.
We call $C$ \emph{connected}, if $G|_{\tiled{}}$ is connected or if $G|_{\tiled{} \cup \{p\}}$ is connected and the agent carries a tile, where $G|_{W}$ denotes the subgraph of $G$ induced by some nodeset $W$.
That is, we allow the subgraph induced by all tiled nodes to disconnect, as long as a tile carried by the agent maintains connectivity.
This constraint prevents the agent and tiles to drift apart, e.g., in liquid or low gravity environments.

The agent $r$ is the only active entity in this model.
It has strictly limited sensing and computing power and can act on passive tiles by picking up a tile, moving and placing it at some spot.
Particularly, we assume an agent with the computational capabilities of a deterministic finite automaton that performs discrete steps of \emph{Look-Compute-Move} cycles.
In the \emph{look}-phase, the agent observes whether its current position $p$ and the twelve neighbors of $p$ are tiled or empty.
The agent is equipped with a compass that allows it to distinguish the relative positioning of its neighbors using the twelve above mentioned compass directions.
Its initial rotation and chirality can be arbitrary, but we assume that it remains consistent throughout the execution.
For ease of presentation, our algorithms and their analysis are described according to the robot's local view, i.e., we do not distinguish between local and global compass directions.
Based on the information gathered in the look phase, the agent determines its next state transition according to the finite automaton in the \emph{compute}-phase.
In the \emph{move} phase, the agent performs an \emph{action} that corresponds to the prior state transition.
It either (i) moves to an empty or tiled node adjacent to $p$, (ii) places a tile at $p$, if $p \notin \tiled{}$ and $r$ carries a tile, (iii) picks up a tile from $p$, if $p\in \tiled{}$ and $r$ carries no tile, or (iv) terminates.
The agent can carry at most one tile at a time and during actions (ii) and (iii) the agent loses and gains a tile, respectively.
It's worth noting that we allow the agent to move through tiles while carrying one simultaneously.
From a practical standpoint, this capability can be facilitated by conceptualizing tiles as hollow and foldable.
It is assumed that the agent is initially positioned at a tiled node, as otherwise, there might be no valid action available.
Additionally, we assume that the agent initially carries a tile, a justification for which was provided in \cref{sec:intro}.
While the agent is technically a finite automaton, we describe algorithms from a higher level of abstraction textually and through pseudocode.
It is easy to see that a constant number of variables of constant-size domain each can be incorporated into the agent's constantly many states.


\subsection{Problem Statement} \label{subsec:problem}

Consider an arbitrary initially connected configuration $C^0 = (\tiled{}^0,p^0)$ with $p^0 \in \tiled{}$.
Superscripts in our notation generally refer to step numbers and may be omitted if they are clear from the context.
An algorithm solves the \emph{icicle formation problem}, if its execution results in a sequence of connected configurations $C^0 = (\tiled{}^0,p^0),\dots,C^{T} = (\tiled{}^0,p^0)$ such that nodes in $\tiled{}^{T}$ are in the shape of an icicle (which we define below), $C^{t}$ results from $C^{t-1}$ for $1\leq t \leq {T}$ by applying some action (i)--(iii) to $p^{t-1}$, and the agent terminates (iv) in step ${T}$.

For some node $v \in V$, we denote $v + \X$ the node that is neighboring $v$ in some compass direction $\X$ and $-\X$ the opposite compass direction of $\X$, e.g., $-\UNE = \DSW$.
We call a maximal consecutive array of tiles in direction $\N$ and $\S$ a \emph{column}, in direction $\NW$ and $\SE$ a \emph{row}, and in direction $\UNE$ and $\DSW$ a \emph{tower}.
A \emph{parallelogram} is a maximal consecutive array of equally sized columns $c_0,...,c_m$ (ordered from west to east) whose southernmost tiles at nodes $v_0,...,v_m$ are contained in the same row, i.e., $v_{i} + \SE = v_{i+1}$ for all $0 \leq i < m$.
In a \emph{partially filled} parallelogram, column $c_0$ can have smaller size than columns $c_1,...,c_m$.

An \emph{icicle} is defined as a connected set of towers whose uppermost tiles are contained within the same (partially filled) parallelogram, as illustrated in \cref{subfig:icicle}. 
In other words, tiles `grow' from a single uppermost parallelogram in the $\DSW$ direction, hence the chosen name `icicle'.
Notably, in an icicle, any tile with a neighboring tile at $\UNE$ but not at $\DSW$ (some locally $\DSW$-most tile below the parallelogram) can be picked up without violating connectivity (it is \emph{removable}).
If there is no such tile, i.e., all towers have size one, the northernmost tile of the westernmost column is removable.

\subsection{Structure of the Paper} \label{subsec:structure}

In \cref{sec:preliminaries}, we introduce essential terminology crucial for understanding the algorithm and its subsequent analysis. 
The non-halting icicle-formation algorithm is presented in \cref{sec:algorithm}. 
In \cref{sec:analysis}, we provide formal proofs establishing that the algorithm converges any initially connected configuration into an icicle. 
The termination criteria and a detailed analysis of its runtime are discussed in \cref{sec:runtime}.
Finally, \cref{sec:experimental} explores the simulation results.

\section{Preliminaries} \label{sec:preliminaries}

We assign $x,y$ and $z$ coordinates to each node $v \in V$, denoted by $c(v) = (x(v),y(v),z(v))$, where the $x$-coordinates grow from $\SE$ to $\NW$, $y$-coordinates from $\S$ to $\N$, and $z$-coordinates from $\DSW$ to $\UNE$ (see \cref{subfig:coordinateAxes}).
The coordinates transition between neighbors as follows:

\begin{obs}\label{eq:transition}
    Let $w$ be some reference node with $c(w) = (0,0,0)$. 
    The following holds:
    \setlength{\abovedisplayskip}{3pt}
    \setlength{\belowdisplayskip}{0pt}
    \[
        \arraycolsep=0pt
        \begin{array}{lrclrrrclrclrrrclrclrrr}
            c(w + \UNE&) &\;=&\;(&\phantom{-}0,&0,&1)  &\phantom{aaa}& c(w + \UW&)  &\;=&\;(&1,&-1,&1)   &\phantom{aaa}& c(w + \USE&) &\;=&\;(&0,&-1,&1)\\
            c(w + \N&)   &\;=&\;(&0,&1,&0)  &&  c(w + \NW&)  &\;=&\;(&1,&0,&0)  &&  c(w + \SW&)  &\;=&\;(&1,&-1,&0)\\
            c(w + \S&)   &\;=&\;(&0,&-1,&0)  &&  c(w + \SE&)  &\;=&\;(&-1,&0,&0)  &&  c(w + \NE&)  &\;=&\;(&-1,&1,&0)\\
            c(w + \DSW&) &\;=&\;(&0,&0,&-1)  &&  c(w + \DE&)  &\;=&\;(&-1,&1,&-1)  &&  c(w + \DNW&) &\;=&\;(&0,&1,&-1)
        \end{array}
    \]
\end{obs}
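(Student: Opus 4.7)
The plan is to verify each of the twelve coordinate deltas directly using the definition of the three coordinate axes. Since $x$ grows from \SE{} to \NW, $y$ from \S{} to \N, and $z$ from \DSW{} to \UNE, a single step in each of the basis directions \NW, \N, \UNE{} produces the unit vectors $(1,0,0)$, $(0,1,0)$, $(0,0,1)$ respectively, and the three opposite directions \SE, \S, \DSW{} produce their negations. This fixes six of the twelve entries immediately by definition.

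For the two remaining plane directions \SW{} and \NE, I would exploit the hexagonal structure of the plane layer. In the underlying triangular lattice, walking around the plane triangular face $\{w, w+\N, w+\NE\}$ gives the cycle identity $\N + \SE + \SW = (0,0,0)$, which forces the delta for \SW{} to be $-\N - \SE = (1,-1,0)$; symmetrically, closing the face $\{w, w+\NW, w+\N\}$ yields the delta $(-1,1,0)$ for \NE.

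The main obstacle will be the four remaining off-axis upward and downward directions. For \UW{} and \USE, I would use the layer-to-layer arrangement indicated in \cref{subfig:compassDirections1}: relative to $w$, the \UW-neighbor is reached by composing the two moves \NW{} and \UNE, and the \USE-neighbor is reached by composing \S{} and \UNE. These compositions give $c(w + \UW) = (1,0,0) + (0,0,1) = (1,0,1)$ and $c(w + \USE) = (0,-1,0) + (0,0,1) = (0,-1,1)$. The deltas for the downward directions \DE{} and \DNW{} then follow immediately as the negations of \UW{} and \USE{} by the sign convention for opposite compass directions.

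Finally, I would perform a consistency check by verifying that a generating set of small lattice cycles closes in coordinates; for instance, the upward triangular face $\{w, w+\UNE, w+\N\}$ traversed via directions \UNE, \DNW, \S{} has delta sum $(0,0,1) + (0,1,-1) + (0,-1,0) = (0,0,0)$, confirming that the coordinate labeling is globally well-defined and therefore $c(\cdot)$ extends unambiguously to every node reachable from the reference node $w$.
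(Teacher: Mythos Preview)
The paper does not prove this observation at all; it is stated immediately after the coordinate axes are introduced and is treated as definitional, with the figure (\cref{subfig:coordinateAxes}) serving as the only justification. Your write-up therefore goes well beyond what the paper offers.

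Your argument is sound in outline, but one step deserves a caveat. When you say ``the \UW-neighbor is reached by composing the two moves \NW{} and \UNE'' (and similarly for \USE), you are invoking a geometric fact about the FCC adjacency structure, namely that $w+\UW = (w+\NW)+\UNE$ as nodes of $G$. That identity is true for the stacking shown in \cref{subfig:compassDirections1}, but it is not something you have derived; it is read off the figure in exactly the same way the paper reads off the whole table. So the composition step is not really more rigorous than simply asserting the delta for \UW{} directly---both rest on inspecting the lattice geometry. The genuine content you add over the paper is the cycle-closure consistency check at the end, which confirms that the twelve deltas are mutually compatible and hence define a well-posed $\mathbb{Z}^3$-labeling of $V$; that is a worthwhile sanity check the paper omits.
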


Given some nodeset $S$, let $x_{min}^S,x_{max}^S$ be the minimum and maximum $x$-coordinate of any node in $S$, and define $y_{min}^S,y_{max}^S, z_{min}^S$ and $z_{max}^S$ accordingly.
We normalize coordinates according to the minimum coordinates in the initial set of tiled nodes $\tiled{}^0$, i.e., we set we set $x_{min}^{\tiled{}^0} = y_{min}^{\tiled{}^0} = z_{min}^{\tiled{}^0} = 0$.
The \emph{bounding cylinder} $\bc(S)$ is the set of all nodes (both empty and tiled) whose coordinates are bounded by the minimum and maximum $x$- and $y$-coordinates in $S$, i.e., $\bc(S) = \{v \in V \mid x_{min}^S \leq x(v) \leq x_{max}^S, y_{min}^S \leq y(v) \leq y_{max}^S\}$ (see \cref{subfig:boundingBoxAndCylinder-1}).
Similarly, in the \emph{bounding box} $\bb(S)$ we further bound by the $z$-coordinate, i.e., $\bb(S) = \{v \in \bc(S) \mid z_{min}^S \leq z(v) \leq z_{max}^S\}$.
We refer to the extent of a bounding box along the $x$-, $y$- and $z$-axes as its \emph{width}, \emph{height}, and \emph{depth}.
Note that by the choice of our coordinate axes, the bounding box is always a filled (potentially degenerated) parallelepiped (a 3D rhomboid; see \cref{subfig:boundingBoxAndCylinder-2}).
A node $v$ is $\emph{inside}$ the bounding cylinder (box) of $S$, if $v \in \bc(S)$ ($v \in \bb(S)$); otherwise, $v$ is \emph{outside} of the bounding cylinder (box) of $S$.

\begin{figure}[!t]
    \centering
    \begin{subfigure}[b]{0.22\linewidth}
        \centering%
        \includegraphics[trim=0 -9.5mm 0 0,width=0.95\linewidth,keepaspectratio,width=0.8\linewidth,page=4]{directionsAndAxes}%
        \subcaption{}
        \label{subfig:coordinateAxes}
    \end{subfigure}%
    \begin{subfigure}[b]{0.36\linewidth}
        \centering%
        \includegraphics[width=0.9\linewidth,page=2]{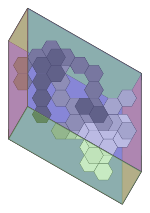}%
        \subcaption{}
        \label{subfig:boundingBoxAndCylinder-1}
    \end{subfigure}%
    \begin{subfigure}[b]{0.36\linewidth}
        \centering%
        \includegraphics[width=0.9\linewidth,page=1]{boundingBoxAndCylinder}%
        \subcaption{}
        \label{subfig:boundingBoxAndCylinder-2}
    \end{subfigure}%
    \caption{Illustrating the $x$-, $y$-, and $z$-coordinate axes (a), the bounding cylinder (b), which infinitely extends in directions $\UNE$ and $\DSW$ as indicated by the arrows, and the bounding box (c) of an example configuration. Tiles are shaded according to their $z$-coordinate, with brighter shades representing lower $z$-coordinates. In the example, there is one layer that contains two fragments (darkest shade of gray), and four layers that each contain a single fragment.
    }
    \label{fig:boundingboxFilled}
\end{figure}

A \emph{layer} $L_i$ is the set of all nodes with $z$-coordinate $i$ that are contained in the bounding cylinder of all tiled nodes, i.e., $L_i = \{v \in \bc{}(\tiled{}) \mid z(v) = i\}$.
We refer to nodes with $z$-coordinate greater than and less than $i$ as the nodes \emph{above} and \emph{below} layer $L_i$, respectively.
The nodeset of a connected component of $G|_{L_i \cap \tiled{}}$ is called a \emph{fragment (of $L_i$)} (see \cref{fig:boundingboxFilled}).

\section{The Algorithm} \label{sec:algorithm}

From a high-level perspective, the agent iteratively transforms locally uppermost fragments into partially filled parallelograms.
This involves rearranging tiles within the same layer and, at times, positioning tiles below the current layer to ensure connectivity.
Whenever the agent encounters tiles of some layer above, it moves further upwards.
Once a parallelogram is successfully formed, the subsequent step entails its \emph{projection}. 
Essentially, during this projection, each tile in the fragment is shifted to the first empty node in the $\DSW$ direction.

\newcommand{\buildPar}{\textsc{BuildPar}}
\newcommand{\proj}{\textsc{Project}}
\newcommand{\icicleAlgo}{\textsc{BuildIcicle}}

In the following, we provide detailed textual descriptions of the parallelogram formation and projection procedures \buildPar{} and \proj{}, as well as the full icicle formation algorithm \icicleAlgo{}.
For completeness, their pseudocodes can be found in \cref{sec:appendix-pseudocode}. 

\subsection{A 2D Parallelogram Formation Algorithm} \label{subsec:algorithm-parallelogram}

\begin{figure}[t]
    \centering
    \begin{subfigure}[c]{0.2\linewidth}
        \centering%
        \includegraphics[width=0.95\linewidth,keepaspectratio,page=1]{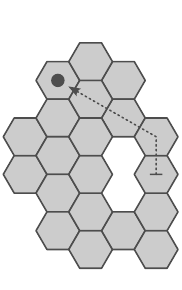}%
        \subcaption{}
        \label{subfig:parallelogram-a}
    \end{subfigure}%
    \hfill
    \begin{subfigure}[c]{0.2\linewidth}
        \centering%
        \includegraphics[width=0.95\linewidth,keepaspectratio,page=2]{parallelogram}%
        \subcaption{}
        \label{subfig:parallelogram-b}
    \end{subfigure}%
    \hfill
    \begin{subfigure}[c]{0.2\linewidth}
        \centering%
        \includegraphics[width=0.95\linewidth,keepaspectratio,page=3]{parallelogram}%
        \subcaption{}
        \label{subfig:parallelogram-c}
    \end{subfigure}%
    \hfill
    \begin{subfigure}[c]{0.2\linewidth}
        \centering%
        \includegraphics[width=0.95\linewidth,keepaspectratio,page=4]{parallelogram}%
        \subcaption{}
        \label{subfig:parallelogram-d}
    \end{subfigure}%
    \hfill
    \begin{subfigure}[c]{0.2\linewidth}
        \centering%
        \includegraphics[width=0.95\linewidth,keepaspectratio,page=5]{parallelogram}%
        \subcaption{}
        \label{subfig:parallelogram-e}
    \end{subfigure}%
    \linebreak
    \begin{subfigure}[c]{0.2\linewidth}
        \centering%
        \includegraphics[width=0.95\linewidth,keepaspectratio,page=6]{parallelogram}%
        \subcaption{}
        \label{subfig:parallelogram-f}
    \end{subfigure}%
    \hfill
    \begin{subfigure}[c]{0.2\linewidth}
        \centering%
        \includegraphics[width=0.95\linewidth,keepaspectratio,page=7]{parallelogram}%
        \subcaption{}
        \label{subfig:parallelogram-g}
    \end{subfigure}%
    \hfill
    \begin{subfigure}[c]{0.2\linewidth}
        \centering%
        \includegraphics[width=0.95\linewidth,keepaspectratio,page=8]{parallelogram}%
        \subcaption{}
        \label{subfig:parallelogram-h}
    \end{subfigure}%
    \hfill
    \begin{subfigure}[c]{0.2\linewidth}
        \centering%
        \includegraphics[width=0.95\linewidth,keepaspectratio,page=9]{parallelogram}%
        \subcaption{}
        \label{subfig:parallelogram-i}
    \end{subfigure}%
    \hfill
    \begin{subfigure}[c]{0.2\linewidth}
        \centering%
        \includegraphics[width=0.95\linewidth,keepaspectratio,page=10]{parallelogram}%
        \subcaption{}
        \label{subfig:parallelogram-j}
    \end{subfigure}%
    \caption{The parallelogram formation algorithm on a 2D configuration. The agent performs multiple steps between each depicted configuration. In (a) and (b) the agent finds a westernmost column, and in (j) the agent terminates. In all other cases, a tile is shifted from the cross to the circle, where the dashed lines indicate the path traversed before placing the tile. The path back to where the tile is picked up as well as the movement to the next column (e.g., (e)--(f)) is not shown.}
    \label{fig:parallelogram}
\end{figure}

Refer to \cref{fig:parallelogram} for an illustrative example of the algorithm in action.
The algorithm initiates with the agent searching for a locally westernmost column.
In configurations where multiple columns share the same $x$-coordinate and are locally westernmost, the agent prioritizes finding the northernmost among them. 
This is achieved by moving in the $\NW$, $\SW$, and $\N$ directions, prioritized in that order, until no more tile is encountered in any of these directions.
Eventually the agent stops upon reaching the northernmost tiled node $v$ of some column $c$.
We refer to the steps involved in finding column $c$ as the \emph{search phase}.

Subsequently, it executes the \buildPar{} procedure, which we describe in the following: 
Starting from node $v$, the agent traverses each column in the configuration from $\N$ to $\S$.
If, during the traversal of the first column $c$, the agent encounters either a more western column (as depicted in \cref{subfig:parallelogram-b}) or a column with the same $x$-coordinate as $c$ to the north while moving $\N$ in the next column $c'$, it discontinues the current traversal and transitions to the search phase.
Notably, in the latter case, it first fully traverses column $c'$ in direction $\N$ and afterwards moves to the first column west of $c'$.
This technical detail will play an important role in the runtime analysis.
While traversing a column in the $\S$ direction, the agent actively looks for an empty node that \emph{violates} the shape of a (partially filled) parallelogram with westernmost column $c$.
Specifically, it checks the two empty nodes immediately above (excluding column $c$) and below each column, as well as each empty neighbor to the east of the column.
Upon finding such a violating empty node $w$, the agent first places its carried tile at $w$ and then returns to column $c$ to retrieve the tile from $v$.
Subsequently, this exchange of tiles is termed as a \emph{tile shift} from $v$ to $w$ or as \emph{shifting} (the tile) from $v$ to $w$ (recall that the agent initially carries a tile that was never placed at any node).
After picking up the tile at $v$, the agent moves to an adjacent tile and transitions to the search phase again.
The agent terminates at the empty node $\S$ of the easternmost column once the configuration is fully traversed without encountering any violating nodes.
Any of the following conditions are sufficient for an empty node $w$ to be considered violating:
(1) $w$ has a tile at $\N$, $\NE$ and $\SE$ (e.g., \cref{subfig:parallelogram-c}), (2) $w$ has a tile at $\S$ and $\SE$ (e.g., \cref{subfig:parallelogram-d}) and is not $\N$ of the westernmost column (recall that we allow the parallelogram to be partially filled), (3) $w$ has a tile at $\NW$ and $\N$ (e.g., \cref{subfig:parallelogram-e,subfig:parallelogram-f,subfig:parallelogram-g,subfig:parallelogram-i}), (4) $w$ has a tile at $\NW, \SW$ and $\S$ (e.g., \cref{subfig:parallelogram-h}).

\subsection{An Icicle Formation Algorithm} \label{subsec:algorithm-icicle}

\newcommand{\tiledNeighb}{N_\tiled{}}

\begin{figure}[t]
    \centering
    \begin{subfigure}[c]{0.2\linewidth}
        \centering%
        \includegraphics[width=\linewidth,keepaspectratio,page=1]{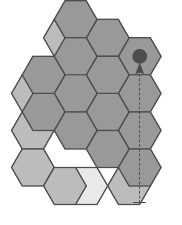}%
        \subcaption{}
        \label{subfig:project-A}
    \end{subfigure}%
    \hfill
    \begin{subfigure}[c]{0.2\linewidth}
        \centering%
        \includegraphics[width=\linewidth,keepaspectratio,page=2]{project}%
        \subcaption{}
        \label{subfig:project-B}
    \end{subfigure}%
    \hfill
    \begin{subfigure}[c]{0.2\linewidth}
        \centering%
        \includegraphics[width=\linewidth,keepaspectratio,page=3]{project}%
        \subcaption{}
        \label{subfig:project-C}
    \end{subfigure}%
    \hfill
    \begin{subfigure}[c]{0.2\linewidth}
        \centering%
        \includegraphics[width=\linewidth,keepaspectratio,page=4]{project}%
        \subcaption{}
        \label{subfig:project-D}
    \end{subfigure}%
    \hfill
    \begin{subfigure}[c]{0.2\linewidth}
        \centering%
        \includegraphics[width=\linewidth,keepaspectratio,page=5]{project}%
        \subcaption{}
        \label{subfig:project-E}
    \end{subfigure}%
    \linebreak
    \begin{subfigure}[c]{0.6\linewidth}
        \centering%
        \includegraphics[trim=0 0 0 -2.5mm,width=0.95\linewidth,keepaspectratio,page=6]{project}%
        \subcaption{}
        \label{subfig:project-heightOne}
    \end{subfigure}%
    \caption{During a projection, the agent (black disk) shifts each tile of a fragment in direction $\DSW$. Detailed in (a-d) is the projection of a single column; (e) is a snapshot of the configuration after the projection. The special case of a parallelogram with a height of one is shown in (f). To maintain connectivity in that case, the agent moves $\SW + \DNW$ to transition below the~next~column.}
    \label{fig:project}
\end{figure}

\begin{figure}[t]
    \begin{minipage}[b]{0.58\linewidth}%
        \begin{subfigure}[t]{0.165\linewidth}%
            \centering\includegraphics[width=0.95\linewidth,keepaspectratio,page=1]{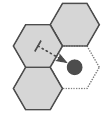}\subcaption{}\label{subfig:criticalCases-A}%
        \end{subfigure}%
        \begin{subfigure}[t]{0.165\linewidth}%
            \centering\includegraphics[width=0.95\linewidth,keepaspectratio,page=2]{criticalCases}\subcaption{}\label{subfig:criticalCases-B}%
        \end{subfigure}%
        \begin{subfigure}[t]{0.165\linewidth}%
            \centering\includegraphics[width=0.95\linewidth,keepaspectratio,page=3]{criticalCases}\subcaption{}\label{subfig:criticalCases-C}%
        \end{subfigure}%
        \begin{subfigure}[t]{0.165\linewidth}%
            \centering\includegraphics[width=0.95\linewidth,keepaspectratio,page=4]{criticalCases}\subcaption{}\label{subfig:criticalCases-D}%
        \end{subfigure}%
        \begin{subfigure}[t]{0.165\linewidth}%
            \centering\includegraphics[width=0.95\linewidth,keepaspectratio,page=5]{criticalCases}\subcaption{}\label{subfig:criticalCases-E}%
        \end{subfigure}%
        \begin{subfigure}[t]{0.165\linewidth}%
            \centering\includegraphics[width=0.95\linewidth,keepaspectratio,page=16]{criticalCases}\subcaption{}\label{subfig:criticalCases-F}%
        \end{subfigure}\linebreak%
        \begin{subfigure}[t]{0.165\linewidth}%
            \centering\includegraphics[width=0.95\linewidth,keepaspectratio,page=17]{criticalCases}\subcaption{}\label{subfig:criticalCases-G}%
        \end{subfigure}%
        \begin{subfigure}[t]{0.165\linewidth}%
            \centering\includegraphics[width=0.95\linewidth,keepaspectratio,page=6]{criticalCases}\subcaption{}\label{subfig:criticalCases-H}%
        \end{subfigure}%
        \begin{subfigure}[t]{0.165\linewidth}%
            \centering\includegraphics[width=0.95\linewidth,keepaspectratio,page=7]{criticalCases}\subcaption{}\label{subfig:criticalCases-I}%
        \end{subfigure}%
        \begin{subfigure}[t]{0.165\linewidth}%
            \centering\includegraphics[width=0.95\linewidth,keepaspectratio,page=8]{criticalCases}\subcaption{}\label{subfig:criticalCases-J}%
        \end{subfigure}%
        \begin{subfigure}[t]{0.165\linewidth}%
            \centering\includegraphics[width=0.95\linewidth,keepaspectratio,page=9]{criticalCases}\subcaption{}\label{subfig:criticalCases-K}%
        \end{subfigure}%
        \begin{subfigure}[t]{0.165\linewidth}%
            \centering\includegraphics[width=0.95\linewidth,keepaspectratio,page=10]{criticalCases}\subcaption{}\label{subfig:criticalCases-L}%
        \end{subfigure}\linebreak%
        \begin{subfigure}[t]{0.165\linewidth}%
            \centering\includegraphics[width=0.95\linewidth,keepaspectratio,page=11]{criticalCases}\subcaption{}\label{subfig:criticalCases-M}%
        \end{subfigure}%
        \begin{subfigure}[t]{0.165\linewidth}%
            \centering\includegraphics[width=0.95\linewidth,keepaspectratio,page=12]{criticalCases}\subcaption{}\label{subfig:criticalCases-N}%
        \end{subfigure}%
        \begin{subfigure}[t]{0.165\linewidth}%
            \centering\includegraphics[width=0.95\linewidth,keepaspectratio,page=13]{criticalCases}\subcaption{}\label{subfig:criticalCases-O}%
        \end{subfigure}%
        \begin{subfigure}[t]{0.165\linewidth}%
            \centering\includegraphics[width=0.95\linewidth,keepaspectratio,page=14]{criticalCases}\subcaption{}\label{subfig:criticalCases-P}%
        \end{subfigure}%
        \begin{subfigure}[t]{0.165\linewidth}%
            \centering\includegraphics[width=0.95\linewidth,keepaspectratio,page=15]{criticalCases}\subcaption{}\label{subfig:criticalCases-Q}%
        \end{subfigure}%
        \begin{subfigure}[t]{0.165\linewidth}%
            \centering\includegraphics[width=0.95\linewidth,keepaspectratio,page=18]{criticalCases}\subcaption{}\label{subfig:criticalCases-R}%
        \end{subfigure}%
    \end{minipage}%
    \begin{minipage}[b]{0.42\linewidth}%
        \begin{subfigure}[t]{0.5\linewidth}%
            \centering\includegraphics[trim=0 -9.5mm 0 0,width=0.95\linewidth,keepaspectratio,page=20]{criticalCases}\subcaption{}\label{subfig:criticalCases-S}%
        \end{subfigure}%
        \begin{subfigure}[t]{0.5\linewidth}%
            \centering\includegraphics[trim=0 -9.5mm 0 0,width=0.95\linewidth,keepaspectratio,page=19]{criticalCases}\subcaption{}\label{subfig:criticalCases-T}%
        \end{subfigure}%
    \end{minipage}%
    \caption{Illustrating all scenarios in which the northernmost node of a locally westernmost column is not removable. For brevity, (a–-r) only illustrate the agent's movement (indicated by arrows) to the empty node (with a dashed outline) that is tiled next; (s) and (t) also portray the subsequent tile shifts. In (r), the agent may alternatively enter \buildPar{} if the outlined node were tiled. Note that (s) and (t) only show instances where a tile at $\DSW$ (s) and $\DE$ (t) is encountered.}
    \label{fig:criticalCases}
\end{figure}
From a high-level perspective, the construction of an icicle involves the iterative transformation of a locally uppermost fragment into a parallelogram, followed by a projection of the fragment in the $\DSW$ direction.
When applying the parallelogram construction algorithm in a 2D configuration, the agent can always shift the tile at the northernmost node $v$ of a locally westernmost column without violating connectivity (\cref{subfig:criticalCases-A} illustrates that connectivity is preserved in the only critical 2D case).
In a 3D configuration, the situation becomes more intricate.
There are multiple cases in which the tile at $v$ must remain in its immediate neighborhood to avoid violating connectivity.
Additionally, there is a case in which the tile at $v$ cannot be moved at all unless neighboring tiles are also moved.
We categorize these cases based on specific properties of node $v$, which we define as follows:

\begin{definition}\label{def:removableShiftableUnmovable}
    Let $v \in \tiled{}$ be an arbitrary tiled node. Denote by $N(v)$ the neighborhood of $v$ (exluding $v$), and by $\tiledNeighb{}(v)$ its subset of only tiled nodes.
    Node $v$ is \emph{removable}, if the tiled neighbors of $v$ are locally connected, i.e., $G|_{\tiledNeighb{}(v)}$ is connected.
    Node $v$ is \emph{shiftable}, if $G|_{\tiledNeighb{}(v)}$ is disconnected and there exists a node $w \neq v$ (termed \emph{bridge node} of $v$) for which $G|_{\tiledNeighb{}(v)\cup \{w\}}$ is connected.
    Any node that is neither removable nor shiftable is termed~\emph{unmovable}.
\end{definition}

We now state the full icicle algorithm:
The agent starts in the search phase where it repeatedly moves $\UW, \USE, \UNE, \NW, \SW$ and $\N$ until it eventually stops at some node $v$.

If node $v$ is removable, the parallelogram traversal procedure \buildPar{} is entered. 
There are three possible outcomes: the agent returns from the procedure after finding a more western column or some tile above, after placing a tile, or at the empty node $\S$ of the fragment's easternmost column.
In the first case, the agent transitions to the search phase. 
In the second case, the agent first moves back to pick up the tile at node $v$, then moves to the next tile at $\S$ or $\SE$, and afterwards transitions to the search phase.
In the third case, the current fragment forms a correctly shaped parallelogram and the agent proceeds by executing the \proj{} procedure.
During \proj{}, each tile of the fragment is projected in the $\DSW$ direction.
Starting with the easternmost column, tiles are projected columnwise from east to west and within the columns from $\N$ to $\S$ (see \cref{subfig:project-A,subfig:project-B,subfig:project-C,subfig:project-D,subfig:project-E}).
Let $v_0,...,v_k$ be the nodes of the currently projected column ordered from $\N$ to $\S$.
For each $i = 0,...,k$, the agent performs a tile shift from $v_i$ to the first empty node $w_i$ in direction $\DSW$ of $v_i$.
After picking up the last tile of the column at $v_k$, the agent moves $\NW$ and continues the projection in the western neighboring column.
In the special case of a degenerated parallelogram with a height of one, after picking up a tile, the agent moves $\SW$ and $\DNW$ instead (see \cref{subfig:project-heightOne}).
These additional steps ensure that connectivity is maintained during the projection.
Once the last tile of the fragment is projected, the agent transitions to the search phase in the layer below.

Otherwise, if the agent stops at a non-removable node $v$, it acts according to the case distinction outlined below, prioritized in the given order (refer to \cref{fig:criticalCases} for a graphical overview).
Subsequently, the agent transitions to the search phase, concluding our algorithm.

\begin{case} \label{case:a}
    If $v + \SE$ is a bridge node of $v$ (thereby $v$ is shiftable and $v + \SE$ is empty), then shift the tile from $v$ to $v + \SE$ (see \cref{subfig:criticalCases-A,subfig:criticalCases-B,subfig:criticalCases-C,subfig:criticalCases-D,subfig:criticalCases-E,subfig:criticalCases-F,subfig:criticalCases-G}), and move to $v + \S$ afterwards.
\end{case}

\begin{case} \label{case:b}
    If $v + \DE$ is a bridge node of $v$, and at least one neighboring tile is not at $\DSW$ or $\SE$, then shift the tile from $v$ to $v + \DE$ (see \cref{subfig:criticalCases-H,subfig:criticalCases-I,subfig:criticalCases-J,subfig:criticalCases-K,subfig:criticalCases-L,subfig:criticalCases-M,subfig:criticalCases-N,subfig:criticalCases-O,subfig:criticalCases-P,subfig:criticalCases-Q}), and move to the first tile at $v + \S$, $v + \SE$ or $v + \NE$.
    Additionally, if $v + \S$ is empty and both $v + \SE$ and $v + \NE$ are tiled, then traverse the next column starting at $v + \SE$ in direction $\S$. 
    If during that traversal a tile at $\UW, \USE$ or $\SW$ is encountered, then immediately transition to the search phase.
\end{case}

\begin{case} \label{case:c}
    If the only tiled neighbors of $v$ are at $\DSW$ and $\SE$, then move $\SE$ and observe node $w = v + \SE + \DSW$. If $w$ is empty, then shift the tile from $v$ to $w$ (see \cref{subfig:criticalCases-R}), and move to $v + \SE$ afterwards. 
    Otherwise, if $w$ is already tiled, move back to $v$ and enter \buildPar{}.
\end{case}

\begin{case} \label{case:d}
    If the only tiled neighbors of $v$ are at $\DNW, \S$ and $\NE$ ($v$ is unmovable, see \cref{subfig:criticalCases-S}), then follow these steps:
    First, move $\S$ until some node $w$ is entered that has a neighboring tile at $\UW, \USE, \SW, \DSW, \SE$ or $\DE$, or until there is no more tile in direction $\S$.
    If $w$ has a tile at $\UW, \USE$ or $\SW$, then immediately transition to the search phase.
    Otherwise, shift each tile in the column that is somewhere $\N$ of $w$ in direction $\DE$ (including $w$ if $w + \DE$ is empty).
    To be precise, let $v_k,v_{k-1},...,v_1, v$ be the nodes of the column ordered from $\S$ to $\N$ starting at $v_k = w + \N$ (or $v_k = w$ if $w + \DE$ is empty).
    Perform a tile shift from $v_i$ to $v_i + \DE$ for each $i$ with $k \geq i > 0$. 
    After the tile shift at $v_i$ with $i > 0$, move $\NE + \DNW$ to be positioned at $v_{i-1} + \DE$ (to preserve connectivity).
    Once the final tile is picked up, move to $v + \NE$.
\end{case}

\begin{case} \label{case:e}
    If the only tiled neighbors of $v$ are at $\DNW$ and $\S$ (see \cref{subfig:criticalCases-T}), then proceed analogously to the previous case, with the exception that tiles at $\DSW$ are disregarded. 
    Additionally, make the following adaptations:
    If no tile at $\UW, \USE, \SW, \SE$ or $\DE$ is encountered, then project the whole column (which is a parallelogram of width one) in direction $\DSW$.
    Otherwise, after performing the final tile shift in direction $\DE$, repeatedly move $\S$ (on empty nodes) and enter the first tiled node at $\S$ or $\SE$ (which must exist since we did not project).
\end{case}

The following remarks aim to clarify the choices made in the above case distinction:
In~\cref{case:c}, the node $v + \DE$ serves as a bridge node for $v$; however, the agent takes an additional step by attempting to shift the tile to $v + \SE + \DSW$.
This decision stems from the fact that $v + \DE$ is not within the bounding cylinder of tiles observable from node $v$.
Similarly, in \cref{case:d}, $v + \DSW$ serves as a bridge node for $v$.
Although $v + \DSW$ is within the bounding cylinder of observable tiles, it shares the same $x$- and $y$-coordinates as $v$. 
It is essential to our analysis that tiles are never placed outside of the bounding cylinder, and that, except for projections, tiles consistently advance to the east or south.
In \cref{case:b}, the agent removes the last tile of some column $c$ and instead of immediately transitioning to the search phase, it first traverses the next column $c'$ in the $\S$ direction.
Similarly to the \buildPar{} procedure, where the agent first traverses the next column $c'$ fully in direction $\N$ whenever a more northern column of the same $x$-coordinate as $c$ is found, this additional traversal is crucial for the runtime analysis.
To elaborate, if column $c'$ has multiple adjacent columns to the west, then directly entering the search phase would result in repeatedly traversing the same tiles within column $c'$.
However, with the additional traversal in direction $\S$ in \cref{case:b} (and in direction $\N$ in procedure \buildPar{}), we can ensure that each tile of $c'$ is visited only a constant number of times whenever the agent does not currently perform a tile shift.

\section{Analysis} \label{sec:analysis}

\newcommand{\propertyOne}{\hyperref[def:fragmentProperties]{P1}}
\newcommand{\propertyTwo}{\hyperref[def:fragmentProperties]{P2}}
\newcommand{\propertyThree}{\hyperref[def:fragmentProperties]{P3}}
\newcommand{\properties}[1]{\hyperref[def:fragmentProperties]{#1}}
\newcommand{\allProperties}{\hyperref[def:fragmentProperties]{P1--P3}}

\newcommand{\platforms}{\mathcal{P}}
\newcommand{\allCases}{\cref{case:a,case:b,case:c,case:d,case:e}}

Our proof structure is as follows.
We first show that our algorithm complies with the connectivity constraint of the 3D hybrid model.
Following this, we introduce three key definitions, termed \allProperties{}, each associated with some fragment of a configuration.
Moving forward, we show convergence towards an icicle in two steps.
First, we show that any initially connected configuration converges to a configuration containing a fragment that satisfies \allProperties{}.
Second, we prove that the aforementioned configuration further converges to an icicle.
In the following, $C^i = (\tiled{}^i, p^i)$ denotes the configuration that results from the execution of \icicleAlgo{} for $i$ steps.

\begin{lemma}\label{lem:connectivityIsMaintained}
    If the agent disconnects $G|_{\tiled{}^i}$ in step $i$, then $G|_{\tiled{}^{i+4}}$ is connected, and for all $i < j < i+4$: $G|_{\tiled{}^{j} \cup \{p^j\}}$ is connected and the agent carries a tile.
\end{lemma}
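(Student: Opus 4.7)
My plan is to isolate the only action that can disconnect $G|_{\tiled{}}$---a pickup (action~(iii))---since moves leave $\tiled{}$ unchanged and placements only add tiles. I would then enumerate every pickup performed by \icicleAlgo{} and classify it by the code path that triggers it: (a) a tile shift inside \buildPar{}, (b) a tile shift inside \proj{} (both the regular east-to-west column projection and the height-one degenerate case), and (c) a tile shift inside one of \allCases{}. For each pickup at step $i$ I would analyze the short window from $i$ to $i+4$: the placement that just seeded the shift's target, the pickup itself at some node $v$, the two or three subsequent moves with the carried tile, and the next placement that restores $\tiled{}$.

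I would first dispose of the non-disconnecting categories. Pickups in \buildPar{} and in \cref{case:a,case:b,case:c} are preceded by a placement at a bridge or violating node that, by \cref{def:removableShiftableUnmovable}, turns $v$ locally removable in the updated tile set; hence the pickup does not disconnect at all and the lemma's premise is vacuous. Regular \proj{}-shifts should be similar: the already-projected eastern column (one layer below) together with the still-unprocessed tiles of the column cover the neighborhood of $v_i$ after the pickup, so $\tiled{}$ stays connected. These reductions rely only on the removability/bridge-node definitions together with a handful of direct adjacency identities derived from \cref{eq:transition} (for example $v_i+\DSW+\UW = v_i+\NW$ and $v_i+\DSW+\NW = v_i+\DE$), and should be routine.

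The hard part will be the genuinely disconnecting cases, which I expect to be the height-one \proj{} subcase and the $\DE$-shift sequence in \cref{case:d,case:e}. For these I would argue two things in sequence. First, at step $i$ the agent sits at the just-vacated $v$, whose compass neighborhood contains tiles from every connected component of $\tiled{}^i$, so $G|_{\tiled{}^i \cup \{p^i\}}$ is connected. Second, I would trace the exact two-move detour prescribed by the algorithm (the $\SW+\DNW$ pair in height-one \proj{}, and the $\NE+\DNW$ pair in \cref{case:d,case:e}) and verify, using compass-direction sums such as $\SW+\DNW = \NW+\DSW$, that each intermediate position $p^j$ is adjacent to a tile of each disconnected component, and that the placement at step $i+4$ lands at a node---e.g.\ $v_{i-1}+\DSW$ in the height-one projection, or the next column's $\DE$-target in \cref{case:d}---that bridges the components and reconnects $\tiled{}^{i+4}$. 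The main obstacle is the geometric bookkeeping at those intermediate empty positions: I must simultaneously track the just-placed tile, the carried tile, and the yet-unprocessed part of the column, and certify by explicit FCC-neighbor lookup that no component drifts out of reach of the agent during the detour. Once that geometric check is carried out per case, the step count trivially lies within four, establishing the lemma.
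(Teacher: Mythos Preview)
Your proposal is correct and follows the same case-by-case approach as the paper: dismiss \buildPar{} and \cref{case:a,case:b,case:c} via removability/bridge-node reasoning, argue that height-$>$1 projections do not disconnect, and trace the two-move detours for the height-one projection and for \cref{case:d,case:e} to verify the carried-tile agent bridges the components until the next placement. Two minor slips to fix: your example identity $v_i+\DSW+\NW=v_i+\DE$ is wrong (the vector sum is $(1,0,-1)\neq\DE=(-1,0,-1)$; the adjacency you want is $v_i+\DSW+\SE=v_i+\DE$), and for \buildPar{} the paper's justification is simply that $v$ is already removable by the entry condition, not that the violating-node placement ``turns'' it removable.
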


\begin{proof}
    To disconnect $G|_{\tiled{}^i}$, the agent must pick up a tile from some node $v$ in step $i$.
    Note that the lemma's statement holds trivially for step $j = i+1$, since the agent's position does not change by picking up a tile.
    Node $v$ must be a cut node w.r.t. $G|_{\tiled{}^i}$, which implies that it is also a cut node w.r.t. $G|_{\tiledNeighb{}(v)}$, i.e., not a removable node.
    Hence, we can exclude procedure \buildPar{}, as it is only entered if node $v$ were removable.
    In \cref{case:a,case:b,case:c}, a bridge node is tiled first, which by \cref{def:removableShiftableUnmovable} maintains connectivity between nodes in $\tiledNeighb{}(v)$.
    Hence, we must only consider the \proj{} procedure, and \cref{case:d,case:e}.

    Consider the \proj{} procedure.
    After a projection at node $v$, the node $v + \DSW$ must be tiled.
    Node $v + \DSW$ is adjacent to $v + \SW$, a node whose $y$-coordinate is by one smaller, i.e., $y(v + \SW) = y(v) - 1$.
    This implies that connectivity is maintained if the parallelogram's height is larger than one.
    Note that since tiles are projected from $\N$ to $\S$, this also ensures connectivity for the potentially partially filled westernmost column.
    Consider the projection of a parallelogram of height one.
    After each tile shift the agent positions itself below the next column by moving $\SW$ and $\DNW$.
    Both $v + \SW$ and $v + \SW + \DNW$ are adjacent to $v + \DSW$ (which must be tiled) and $v + \NW$ (which is the only node in the next column).
    Thereby, $G|_{\tiled{}^{i+2} \cup \{p^{i+2}\}}$ and $G|_{\tiled{}^{i+3} \cup \{p^{i+3}\}}$ are connected, and after placing the tile at $v + \SW + \DNW$ in step $i+3$, $G|_{\tiled{}^{i+4}}$ is connected again.

    During \cref{case:d,case:e}, the agent performs tile shifts in direction $\DE$, i.e., after picking up a tile at node $v$, $v + \DE$ must be tiled.
    Afterwards, the agent positions itself below the tile at $\N$ by moving $\NE$ and $\DNW$.
    Similarly, $v + \NE$ and $v + \NE + \DNW$ are adjacent to $v + \DE$ (which we know is tiled) and $v + \N$ (which is shifted next) and the proof is analogous.
\end{proof}

With the following auxiliary lemma, we show that tiles are never placed outside of the bounding cylinder $\bc{}(\tiled{})$ .

\begin{lemma}\label{lem:maintainBoundingCylinder}
    If during the execution of \icicleAlgo{} a tile is shifted from some node $v$ to some node $w$, then there are tiled nodes $u_x,u_y \in \tiled{}$ with $x(w) = x(u_x)$ and $y(w) = y(u_y)$.
\end{lemma}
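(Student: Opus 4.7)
The approach will be a case analysis over the places in \icicleAlgo{} where a tile shift is triggered, of which there are four: (i) inside \buildPar{} at a violating empty node, (ii) the $\DSW$-projection shifts of \proj{} and the single-column projection in \cref{case:e}, (iii) shifts at the search-phase stop node $v$ in \cref{case:a,case:b,case:c}, and (iv) the $\DE$-shifts along a column in \cref{case:d,case:e}. Since a tile shift first places the carried tile at $w$ and only afterwards picks up the tile at $v$, the node $v$ is still tiled at the moment the witnesses are required and may be used as a witness whenever its $x$- or $y$-coordinate matches $w$.

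The projection shifts are immediate: direction $\DSW$ preserves both $x$ and $y$ by \cref{eq:transition}, so $u_x = u_y = v$. For \buildPar{}, each of the four violating conditions already lists a tiled neighbor of $w$ in a direction preserving $x$ (namely $\N$ or $\S$) and a tiled neighbor in a direction preserving $y$ (namely $\NW$ or $\SE$), so the witnesses can be read off directly from the conditions.

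The substantive work lies in \allCases{}, where a direct computation yields $x(w) = x(v) - 1$ and $y(w) = y(v)$ in \cref{case:a,case:b,case:c}; thus $u_y = v$ always suffices and only $u_x$ requires effort. \cref{case:c} is immediate because $v + \SE$ is tiled by the case precondition and has $x$-coordinate $x(v) - 1$. For \cref{case:b}, I would enumerate the common neighbors of $v$ and $v + \DE$ and verify they are exactly $\{v, v + \SE, v + \DSW\}$; since the vector $\SE - \DSW$ is not among the twelve direction vectors, $v + \SE$ and $v + \DSW$ are non-adjacent, so the bridge property forces both to be tiled and $v + \SE$ serves as $u_x$. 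The most delicate subcase is \cref{case:a}: the common neighbors of $v$ and $v + \SE$ turn out to be $\{v, v + \UNE, v + \NE, v + \S, v + \DE\}$, the search phase excludes $v + \UNE$, and $\{v + \NE, v + \S, v + \DE\}$ are pairwise non-adjacent in $G$, so the bridge property forces at least two of them to be tiled; since $v + \NE$ and $v + \DE$ are precisely the two with $x$-coordinate $x(v) - 1$, any such pair contains at least one of them and hence supplies $u_x$.

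For \cref{case:d,case:e} every shift has the form $v_i \to v_i + \DE$, so $u_y = v_i$ is immediate, and only an $x$-witness at coordinate $x(v) - 1$ is needed. After the first $\DE$-shift in the case the tile placed at $v_k + \DE$ stays untouched and serves as $u_x$ for every later shift. For the first shift, \cref{case:d}'s precondition supplies a tile at $v + \NE$ (which is not altered during the case), while \cref{case:e}'s stopping criterion for $w$ guarantees that either $w + \SE$ or $w + \DE$ is tiled; each of these candidates lies at $x$-coordinate $x(v) - 1$. The main obstacle will be \cref{case:a}, where combining the FCC common-neighbor enumeration with the pairwise non-adjacency of the candidates and the search-phase restriction takes the most care.
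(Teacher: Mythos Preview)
Your proof is correct and follows the same top-level decomposition as the paper (projection, \buildPar{}, then \allCases{}), but your handling of \cref{case:a,case:b} is genuinely more economical. The paper establishes the witnesses by exhaustively enumerating all $2^6 = 64$ possible neighborhoods at the search-phase stop node $v$ (the $44$ removable ones in \cref{fig:nonCriticalCases} and the $20$ non-removable ones in \cref{fig:criticalCases}), and then reads off an explicit $u_x,u_y$ for each of the $20$ subcases. You instead derive the witnesses directly from the bridge-node definition together with the FCC common-neighbor structure: for \cref{case:b} the only neighbors of $v+\DE$ inside $N(v)$ are $v+\SE$ and $v+\DSW$, so both must be tiled for $v+\DE$ to reconnect the $\geq 2$ components of $N_\tiled(v)$; for \cref{case:a} the analogous set is $\{v+\NE,\,v+\S,\,v+\DE\}$ (after the search phase rules out $v+\UNE$), at least two of which must be tiled, and any two contain one with $x$-coordinate $x(v)-1$. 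This avoids the figure-driven enumeration entirely while yielding the same conclusion. (The pairwise non-adjacency you invoke is not actually needed for the counting argument---two components force two distinct tiled neighbors of the bridge regardless---but it does no harm.)

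Two minor points: your lists of ``common neighbors'' include $v$ itself, which is not in $N(v)$; this is harmless since $v\notin N_\tiled(v)$ and the bridge argument only concerns $N_\tiled(v)$. Also, for \cref{case:d,case:e} your choice $u_y=v_i$ (valid because the tile at $v_i$ is still present when the carried tile is placed at $v_i+\DE$) is cleaner than the paper's stated choice.
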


\begin{proof}
    We prove the lemma by explicitly providing the nodes $u_x, u_y$ that have the same $x$- and $y$-coordinate as node $w$, respectively.
    Note that $u_x = v$ or $u_y = v$ can be a valid choice, since the tile is placed at $w$ before the tile at $v$ is picked up.
    According to \cref{eq:transition}, nodes in direction $\N$ and $\S$ have equal $x$-coordinate, nodes in direction $\NW$ and $\SE$ have equal $y$-coordinate, and in direction $\UNE$ and $\DSW$ have both equal $x$-and $y$-coordinates.

    If the tile shift is performed during the \proj{} procedure, then $u_x = u_y = w + \UNE$ must be tiled and our claim follows.
    If it is performed during the \buildPar{} procedure, then there are four cases: (1) $w$ has tiles at $\N, \NE$ and $\SE$, (2) $w$ has tiles at $\S$ and $\SE$, (3) $w$ has tiles at $\NW$ and $\N$, or (4) $w$ has tiles at $\NW, \SW$ and $\S$.
    In each case we can choose a tiled node at $\N$ or $\S$ for $u_x$, and a tiled node at $\NW$ or $\SW$ for $u_y$ such that our claim follows.

    Any tile shift outside of these procedures is performed in one the \allCases{}.
    Consider the tile shift where node $v$ is the northernmost node of the column considered in these cases.
    Node $v$ cannot have a tiled neighbor in direction $\UW, \USE, \UNE, \NW, \SW$ or $\N$, as otherwise the agent would not leave the search phase at $v$.
    This implies that the candidates for tiled neighbors of $v$ are in directions $\DE, \DNW, \DSW, \SE, \NE$ and $\S$.
    Hence, there are $2^6 = 64$ possible neighborhoods.
    The $20$ cases in which $v$ is not removable are depicted in \cref{fig:criticalCases}.
    For completeness, \cref{fig:nonCriticalCases} depicts the remaining $44$ cases.
    One can easily verify that $v$ is removable in each of those cases, i.e., our case distinction is complete.

    In the following we provide the nodes $u_x$ and $u_y$ that satisfy our claim for each of the $20$ non-removable cases (enumerated according to \cref{fig:criticalCases}).
    In cases (a)--(e), $u_x = w + \N$ and $u_y = v$.
    In cases (f) and (g),  $u_x = w + \DNW$ and $u_y = v$.
    In cases (h)--(i), (k)--(m) and (p)--(r), $u_x = u_y = w + \UNE$.
    In cases (j) and (n)--(o), $u_x = w + \USE$ and $u_y = w + \NW$.
    Finally, in cases (s) and (t), multiple tile shifts are performed precisely from the nodes $v_k,...,v_0$ to the nodes $w_k = v_k + \DE, ..., w_0 = v_0 + \DE$ in the given order, where $v_k$ is the $\S$-most node at which a tile shift is performed, and $v_0 = v$ the $\N$-most node of the column.
    For each $i > 0$, $u_y = v_{i-1}$ is a valid choice, and for $i = 0$, $u_y = v_0 + \NW$.
    In case (s), we can simply choose $u_x = v + \NE$ for all $i$.
    In case (t), the node $v + \NE$ is empty.
    Here, either $v_k + \S + \DE$ or $v_k + \SE$ is a valid choice for $u_x$.
    Note that either of the two nodes must be tiled, as otherwise no tile shift at $v_k$ would be performed in case (t).

    For all possible tile shifts that can be performed during the execution of \icicleAlgo{} we explicitly provided the nodes $u_x$ and $u_y$, which concludes the lemma.
\end{proof}

\begin{figure}[t]
    \centering
    \includegraphics[width=\linewidth,keepaspectratio,page=1]{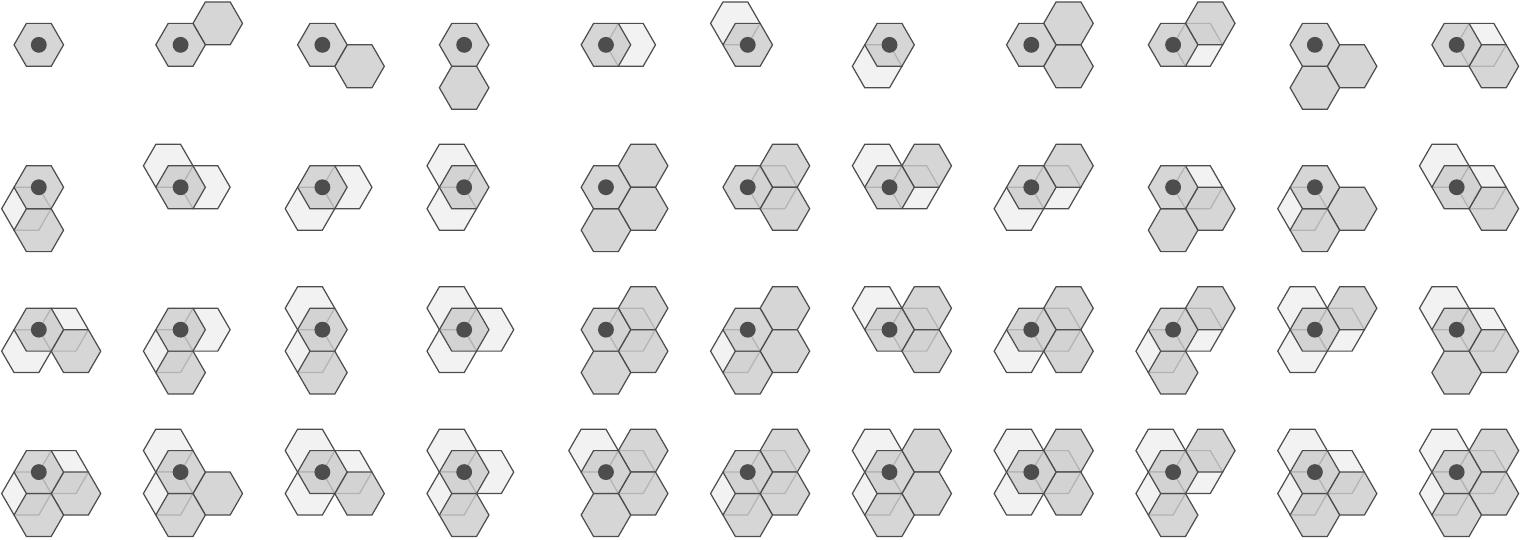}%
    \caption{Illustrating all 44 possible neighborhoods $\tiledNeighb{}(v)$ for the case where the agent (black disk) leaves the moving phase at a removable node $v$. In these cases, the agent enters the \buildPar{} procedure, i.e., they do not need to be handled explicitly by the icicle-formation algorithm. Refer to \cref{fig:criticalCases} for the 20 neighborhoods that are handled explicitly by \cref{case:a,case:b,case:c,case:d,case:e}, and note that there are precisely $2^6 = 64$ possible neighborhoods, i.e., our listing is complete.}
    \label{fig:nonCriticalCases}
\end{figure}

\begin{lemma}\label{lem:coordinateZeroNodesExist}
    For each $i\geq 0$ there is a tiled node $v \in \tiled{}^i$ with $x(v) = 0$.
\end{lemma}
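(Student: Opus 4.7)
The plan is induction on $i$, with the base case $i=0$ handed to us by the normalization $x_{min}^{\tiled{}^0}=0$.

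For the inductive step I would first establish, by a short side induction, the auxiliary invariant that $x(u)\geq 0$ for every $u\in\tiled{}^i$. Only placements can introduce new tiles, and \cref{lem:maintainBoundingCylinder} guarantees that any shift's placement at a node $w$ is witnessed by some $u_x \in \tiled{}$ with $x(u_x) = x(w)$; inspecting the explicit witnesses produced in that lemma's proof shows that $u_x$ is always a tile that already existed before the placement, so the hypothesis $x(u_x) \geq 0$ transfers to $x(w) \geq 0$.

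With this invariant in place, the main claim reduces to the following: whenever the agent picks up a tile at $v$ with $x(v) = 0$, the corresponding earlier placement of its shift occurred at a node $w$ with $x(w) = 0$ as well. Since a shift's placement and pick-up happen in consecutive steps separated only by movement, such a $w$ is still tiled at step $i+1$ and, being distinct from $v$, witnesses that $\tiled{}^{i+1}$ still contains a node with $x$-coordinate $0$. I would argue this claim by case-analysis over the shifts of \icicleAlgo{}: \proj{} sets $w = v + \DSW$ and preserves $x$ by \cref{eq:transition}; \buildPar{} starts from a locally westernmost $v$, so $x(v) = 0$ together with the $x \geq 0$ invariant forces the entire local fragment into the plane $x=0$, and since the agent only traverses columns of non-increasing $x$-coordinate and every violation pattern requires a tiled neighbor at $x(w)$ or $x(w) - 1$, one obtains $x(w) = 0$; for \cref{case:a,case:b,case:c,case:d} the target node lies at $x(v) - 1$, and the neighborhood witnesses used in the proof of \cref{lem:maintainBoundingCylinder} show that triggering these cases requires an already tiled node at $x(v) - 1$, which is impossible at $x(v) = 0$; finally, in \cref{case:e} the tiled neighbors of $v$ are at $\DNW$ and $\S$, both of which preserve $x$, and every subsequent shift is either a $\DSW$-projection (preserving $x$) or gated on encountering a tile at $\SE$ or $\DE$, which at $x(v) = 0$ would sit at $x = -1$ and is ruled out by the invariant.

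The main obstacle is the \buildPar{} subcase, where the agent's traversal visits many nodes and one must check that no admissible violation pattern can shift a tile away from the $x = 0$ plane. The cleanest route I see is to combine \cref{lem:maintainBoundingCylinder} with the observation that all columns \buildPar{} traverses have $x$-coordinate at most $x(v)$: this pins $x(w) \leq x(v)$ via the lemma's $u_x$-witness lying at $x(u_x) = x(w)$ inside the fragment, and together with the $x \geq 0$ invariant yields $x(w) = 0$ whenever $x(v) = 0$.
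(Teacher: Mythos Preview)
Your proposal is correct and follows essentially the same approach as the paper's proof: both derive the invariant $x(u)\geq 0$ for all tiled $u$ from \cref{lem:maintainBoundingCylinder}, and both then perform a case analysis over tile shifts (\proj{}, \buildPar{}, and \allCases{}) to show that the last $x=0$ tile cannot be removed. The paper phrases this as a minimal-counterexample contradiction (take the first step with no $x=0$ tile and show the preceding pickup is impossible), while you phrase it as a forward induction (every pickup at $x=0$ is preceded by a placement at $x=0$, or the case cannot arise); these are contrapositives of one another.

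Two small remarks. First, your intermediate claim that ``the entire local fragment'' lies in the plane $x=0$ is not literally true---the search phase only guarantees that $v$ has no tiled neighbor at $\NW,\SW,\N$, so tiles at larger $x$ may sit elsewhere in the fragment---but this does not matter, since your cleaner argument in the last paragraph (that \buildPar{} only traverses columns with $x\leq x(v)$, hence only $x=0$ under the invariant) is the right one and suffices. Second, your separate treatment of \cref{case:e} is actually more careful than the paper's, which lumps all of \allCases{} together under ``there exists a tiled node at $\NE$, $\SE$, or $\DE$''; in \cref{case:e} such a neighbor of $v$ need not exist, and one must observe (as you do) that the $\DE$-shifts are gated on encountering an $\SE$/$\DE$ tile further south, while otherwise a $\DSW$-projection preserves $x$.
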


\begin{proof}
    Assume by contradiction, that the lemma's statement does not hold, and let $i$ be the first step for which there is no tiled node $v \in \tiled{}^i$ with $x(v) = 0$.
    Recall that the $x$-coordinate of any node is relative to the initial minimum $x$-coordinate in $\tiled^0$, which we defined as $x_{min} = 0$.
    Hence, it holds that $i > 0$ and in step $i-1$ the agent has picked up a tile at the only node $v \in \tiled{}^{i-1}$ with $x(v) = 0$.
    We again distinguish whether the tile was picked up as part of a tile shift in the procedures \proj{}, \buildPar{} or in \allCases{}, and lead each case to a contradiction which concludes the lemma.

    First, in procedure \proj{}, between placing a tile at some node $w$ and picking up a tile from node $v$, the agent moves exclusively $\UNE$.
    \cref{eq:transition} implies that $x(w) = x(v)$, which contradicts that $v$ is the only node with $x(v) = 0$.

    Second, consider a tile shift from $v$ to some node $w$ that was initiated in procedure \buildPar{}.
    There are two cases:
    Either $w$ is the empty node $\S$ of the column $c$ that contains $v$, or $w$ lies somewhere east of column $c$.
    The former case again contradicts that $v$ is the only node with $x(v) = 0$.
    In the latter case, there must exist a node $u$ with $x(u) < 0$, which contradicts \cref{lem:maintainBoundingCylinder}.

    Third, consider \allCases{}, and recall that we have established in the proof of \cref{lem:maintainBoundingCylinder} that the 20 cases depicted in \cref{fig:criticalCases} are a complete list of all possible neighborhoods of node $v$.
    In each case there exists a tiled node $u$ at $\NE, \SE$ or $\DE$, which implies $x(u) = -1$ and thus contradicts \cref{lem:maintainBoundingCylinder}.
\end{proof}

We want to measure the progress of tiles within the bounding cylinder towards the east and south by considering their $x$- and $y$-coordinates.
As part of the \buildPar{} procedure and \cref{case:b,case:d,case:e}, the $y$-coordinate of tiles can increase when their $x$-coordinate decreases.
Although the size of the bounding cylinder cannot increase by \cref{lem:maintainBoundingCylinder}, it may decrease.
In such instances, by \cref{lem:coordinateZeroNodesExist}, the resulting bounding cylinder always aligns with the eastern side of the initial bounding cylinder $\bc(\tiled{}^0)$.
To address this, we introduce a combined representation of the $x$- and $y$-coordinates w.r.t. the bounding cylinder $\bc(\tiled{})$ for arbitrary $\tiled{}$.

Let $y_{max}^\tiled{}$ and $y_{min}^\tiled{}$ be the maximum and minimum $y$-coordinates within $\bc(\tiled{})$, and let $h = y_{max}^\tiled{} - y_{min}^\tiled{} + 1$ be the \emph{height} of $\bc(\tiled{})$, i.e., the cylinder's extent along the $y$-axis.
We define the $xy$\emph{-coordinate} of some node $v \in \bc(\tiled{})$ as $xy(v) = x(v) \cdot h + y(v) - y_{min}^\tiled{}$.

Consider the following definitions, which we refer to as \allProperties{}, that relate to some fragment $F \subseteq \tiled{}$.
\propertyOne{} characterizes a locally uppermost fragment, \propertyTwo{} a fragment covering the $xy$-coordinates of all tiled nodes, and \propertyThree{} a fragment wherein tiles have the shape of a parallelogram aligned along the southern, eastern, and northern sides of the bounding~cylinder.

\begin{definition}\label{def:fragmentProperties}
    Let $F \subset \tiled{}$ be an arbitrary fragment.
    \begin{itemize}
        \item \emph{P1:} $F$ is a \emph{platform}, if $\{v + \X \mid v \in F, \X \in \{\UW, \USE, \UNE\}\} \cap \tiled{}$ is an empty set. 
        \item \emph{P2:} $F$ is \emph{covering}, if for each node $v \in \tiled{}$ there is a node $w \in F$ with $xy(w) = xy(v)$.
        \item \emph{P3:} $F$ is an \emph{aligned parallelogram}, if for each node $v \in F$ it holds that for all $i$ with $xy(v) \geq i \geq 0$ there is a node $w \in F$ with $xy(w) = i$.
    \end{itemize}
\end{definition}

We can now use \allProperties{} to give an alternative definition of the icicle shape. 

\begin{definition}\label{def:icicle}
    A Configuration $C = (\tiled{},p)$ is an \emph{icicle}, if it contains a fragment $F$ that satisfies \allProperties{}, and for any node $v \in \tiled{} \setminus F$ it holds that $v + \UNE \in \tiled{}$.
\end{definition}

Any tiled node that is not contained in the fragment $F$ specified in \cref{def:icicle}, must be somewhere $\DSW = - \UNE$ of $F$, as otherwise the number of tiles would be infinite.
Hence, each node $v \in \tiled{} \setminus F$ is contained in a tower of tiles whose uppermost tile is contained in $F$, and thereby \cref{def:icicle} is equivalent to our definition of an icicle from \cref{subsec:problem}.

Subsequently, we only consider configurations in which the agent leaves the search phase at some node $v$. 
This must eventually occur since moving upwards increases its $z$-coordinate, and moving in directions $\SW$, $\NW$, or $\N$ increases its $xy$-coordinate. 
Both coordinates are bounded within any finite set of tiled nodes.
To simplify notation, we use $C^i = (\tiled^i, p^i)$ to represent the configuration where the agent leaves the search phase for the $i$-th time.

Consider the potential function $\Phi^i = \sum_{v \in \tiled^i}xy(v) + |\platforms^i|$, where $\platforms^i$ denotes the set of all platforms, i.e., fragments satisfying \propertyOne{}. 
We first show its monotonicity and lower bound.

\begin{lemma}\label{lem:potential1Monotonicity}
    For each $i \geq 0$ it holds that $\Phi^i \geq \Phi^{i+1} \geq 0$, and if $\Phi^i = \Phi^{i+1}$, then (1) no tile was shifted between step $i$ and $i+1$, or (2) a fragment was projected between step $i$ and $i+1$.
\end{lemma}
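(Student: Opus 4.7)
I would split the proof into a quick non-negativity check and a monotonicity case analysis based on what the agent does between the $i$-th and $(i{+}1)$-st departure from the search phase.

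For non-negativity, \cref{lem:maintainBoundingCylinder} confines every tile to $\bc(\tiled^i)$ and \cref{lem:coordinateZeroNodesExist} keeps the minimum $x$-coordinate of $\tiled^i$ at $0$. Hence $x(v) \geq 0$ and $y(v) \geq y_{\min}^{\tiled^i}$ for every tile $v$, giving $xy(v) = x(v) \cdot h + y(v) - y_{\min}^{\tiled^i} \geq 0$. Combined with $|\platforms^i| \geq 0$, this yields $\Phi^i \geq 0$.

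For monotonicity, between $C^i$ and $C^{i+1}$ the agent performs exactly one of four actions: (i) an aborted \buildPar{} that does no shift; (ii) \buildPar{} with one tile shift; (iii) \buildPar{} that triggers \proj{}; or (iv) exactly one of \allCases{}. Action (i) leaves $\tiled^i$ and hence $\Phi$ unchanged, matching clause (1). For (iii), each shift in \proj{} moves a tile by $\DSW = (0,0,-1)$, preserving every tile's $x$ and $y$ coordinates; this leaves the bounding cylinder (and thus $h$ and $y_{\min}^\tiled$) unchanged, so $\sum_v xy(v)$ is constant, matching clause (2). A short side argument then shows that $|\platforms|$ cannot increase during a projection, since the projected platform descends into layers where it covers, rather than exposes, the fragments beneath.

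For action (ii), the \buildPar{} traversal rules force the violating target $w$ to lie either in a column strictly east of $v$'s locally westernmost column (so $x(w) < x(v)$) or in the same column strictly south of $v$ (so $y(w) < y(v)$ with $x$ equal); in both scenarios $xy(w) < xy(v)$, so $\sum xy$ strictly decreases. Because $v$ is removable (a prerequisite for entering \buildPar{}), its pick-up splits no fragment, and placing at $w$ can only merge fragments, so $|\platforms|$ is non-increasing; hence $\Phi$ strictly decreases. For action (iv), every atomic shift moves a tile one step in $\SE$, $\DE$, or $\SE + \DSW$ --- each with $\Delta x = -1$ and $\Delta y = 0$ --- so each shift contributes exactly $-h$ to $\sum xy$. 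For $|\platforms|$, only the fragments directly below $v$ and below the placement node can have their platform status toggled; a case analysis against the twenty neighborhoods of \cref{fig:criticalCases} yields that the net change in $|\platforms|$ stays below $h$, giving overall strict decrease of $\Phi$.

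The main obstacle lies in the $|\platforms|$ accounting for action (iv) when $h$ is small. Naively, up to six below-neighbor fragments (three below $v$, three below the placement node) could flip platform status, which for $h \leq 3$ would threaten the $-h$ bound on the $\sum xy$ decrease. The key is that for such small $h$ many of these candidate nodes are forced outside the bounding cylinder by \cref{lem:maintainBoundingCylinder} and hence are automatically empty, while the specific neighborhood structures enforced by each of \allCases{} (depicted in \cref{fig:criticalCases}) forbid the worst-case combinations of gains and losses. A careful neighborhood-by-neighborhood verification closes the argument.
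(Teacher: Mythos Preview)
Your overall decomposition---non-negativity from \cref{lem:maintainBoundingCylinder}, then a case split on whether the agent (i) aborts without shifting, (ii) shifts in \buildPar{}, (iii) projects, or (iv) enters one of \allCases{}---matches the paper exactly, and your arguments for (i)--(iii) are essentially the paper's. Two points are worth tightening.

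First, for projections, the paper's $|\platforms|$ bookkeeping is sharper than your ``covers rather than exposes'' intuition: it argues that exactly one platform (the projected fragment $F$) is destroyed, while all nodes in $\{v+\DSW\mid v\in F\}$ land in a single fragment afterwards, so at most one platform is created. This gives $|\platforms^{i+1}|\le|\platforms^i|$ directly, without reasoning about what lies beneath.

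Second, your treatment of action (iv) is more laborious than necessary, and contains a small misconception. Placing a tile at the target $w$ can only \emph{remove} platform status from fragments below $w$, never add it; so your ``three below the placement node'' contribution to the worst case is zero, not three, and the feared six-fold toggle never materialises. More importantly, the paper avoids the neighbourhood-by-neighbourhood verification entirely by observing that \cref{case:d} is the \emph{only} case in which $|\platforms|$ can increase at all, and there by at most one: in \cref{case:a,case:b,case:c} a bridge node is tiled before $v$ is lifted, so $\tiledNeighb(v)$ stays connected and no split occurs, while in \cref{case:e} the column has no $\NE$-neighbour to split off. In \cref{case:d} the neighbourhood $\{\DNW,\S,\NE\}$ forces $h\ge 3$ and forces at least two $\DE$-shifts, so the drop in $\sum xy$ of at least $2(h-1)$ dominates the single possible $+1$ in $|\platforms|$. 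This replaces your small-$h$ case analysis with a one-line observation per case.
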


\begin{proof}
    By definition, $xy(v) \geq 0$ holds for all initially tiled nodes $v \in \tiled{}^0$.
    From \cref{lem:maintainBoundingCylinder} follows that $xy(v) \geq 0$ holds for all $v \in \tiled{}^i$ for all $i \geq 0$.
    Since the number $|\platforms^i|$ of platforms cannot be negative, $\Phi^i \geq 0$ follows for all $i \geq 0$, which concludes the potential's lower bound.

    We proceed by showing monotonicity.
    If $\tiled^i = \tiled^{i+1}$ holds, then $\Phi^i = \Phi^{i+1}$ follows trivially.
    Otherwise, at least one tile was shifted before leaving the search phase for the $i+1$-th time.
    We distinguish whether a tile was shifted in \proj{}, \buildPar{} or in \allCases{}.

    First, consider the projection of a fragment $F$ in procedure \proj{}.
    For each $v \in F$, a tile shift from $v$ to the first empty node $w$ somewhere in direction $\DSW$ is performed.
    The number of platforms increases at most by one, since all nodes in $\{v + \DSW \mid v \in F\}$ are contained in the same fragment in step $i+1$, which may or may not be a platform.
    At the same time, the number of platforms decreases precisely by one, since all nodes in $F$ are empty in step $i+1$, and $F$ must be a platform in step $i$, since only platforms are projected.
    Thereby, $|\platforms^i| \geq |\platforms^{i+1}|$ holds.
    \cref{eq:transition} directly implies $\sum_{v \in \tiled^i}xy(v) = \sum_{v \in \tiled^{i+1}}xy(v)$.
    This concludes  $\Phi^i \geq \Phi^{i+1}$ for the \proj{} procedure.

    Second, if a tile is shifted from node $v$ to some node $w$ in procedure \buildPar{}, then $v$ must be removable.
    This implies that the number of platforms cannot increase by picking up the tile from $v$.
    It cannot increase by placing a tile at $w$ either, since $w$ must be adjacent to some node of the fragment containing $v$.
    Following the columnwise traversal, $w$ is either the node $\S$ of the column containing $v$, or it is a node located east of $v$.
    In both cases, $xy(v) > xy(w)$ holds, concluding strict monotonicity $\Phi^i > \Phi^{i+1}$ for the \buildPar{} procedure.

    Third, consider \allCases.
    A tile is shifted from node $v$ to $v + \SE$ (\cref{case:a}), to $v + \DE$ (\cref{case:b}), or to $v + \SE + \DSW$ (\cref{case:c}).
    It holds that $xy(v + \SE) = xy(v) - h$, $xy(v + \DE) = xy(v) - h + 1$ and $xy(v + \SE + \DSW) = xy(v) - h$.
    Additionally, $h > 1$ holds in \allCases, which is easy to see by examining all possible neighborhoods of node $v$ provided in \cref{fig:criticalCases}.
    We further observe that \cref{case:d} is the only case in which the number of platforms can increase, and if it does, it does so by precisely one.
    However, at the same time at least two tiles are shifted in direction $\DE$, which decreases the potential by at least $2h - 2 > 2$.
    Finally, in \cref{case:e}, the agent either enters the \proj{} procedure, as we have already analyzed in this proof, or tiles are shifted exclusively in the $\DE$ direction, similar to \cref{case:b}.
    Therefore, in each case, we have $\sum_{v \in \tiled^i}xy(v) > \sum_{v \in \tiled^{i+1}}xy(v)$.
    Moreover, in the only case where $|\platforms^i| < |\platforms^{i+1}|$ holds (\cref{case:d}), it follows that $|\platforms^{i+1}| - |\platforms^i| < \sum_{v \in \tiled^i}xy(v) - \sum_{v \in \tiled^{i+1}}xy(v)$, i.e., the potential's increase is compensated by its decrease such that $\Phi^i > \Phi^{i+1}$ holds in \allCases{}.

    Except for the \proj{} procedure, during which $\Phi$ is non-increasing, we have shown that $\Phi$ is strictly decreasing, and lower bounded by zero which concludes the lemma.
\end{proof}

In the following lemma, we show that once the agent is positioned within a fragment that satisfies \allProperties{}, it will remain in such a fragment.
Subsequently, we use the potential $\Phi$ and its monotonicity to prove that this configuration is eventually reached.

\begin{lemma}\label{lem:propertiesMaintained}
If $p^i \in F^i$ where $F^i$ is a fragment in $C^i$ that satisfies \allProperties{}, then $p^{i+1} \in F^{i+1}$ where $F^{i+1}$ is a fragment in $C^{i+1}$ that satisfies \allProperties{}.
\end{lemma}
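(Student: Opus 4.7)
The plan is to pin down the agent's next stopping node $v^\star$ using \allProperties{}, restrict the possible execution branches between $C^i$ and $C^{i+1}$, and then verify \allProperties{} for an explicitly constructed $F^{i+1}$ in each branch. Between $C^i$ and $C^{i+1}$ the agent performs exactly one execution cycle: after the search phase it stops at $v^\star$, then runs either \buildPar{} (possibly followed by \proj{}) or one of \cref{case:a,case:b,case:c,case:d,case:e}, and finally the following search phase delivers it to $p^{i+1}$.

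First I would locate $v^\star$. Because $F^i$ is a platform (\propertyOne{}), the three upward moves $\UW, \USE, \UNE$ cannot fire from any node of $F^i$, so the search phase is confined to the layer of $F^i$. The remaining priorities $\NW, \SW, \N$ each strictly increase the $xy$-coordinate by \cref{eq:transition}, so any step leading out of $F^i$ would reach a node with $xy$-coordinate strictly greater than that of every node of $F^i$, contradicting \propertyTwo{} combined with \propertyThree{}. Hence the agent stays in $F^i$ and stops at the unique node $v^\star \in F^i$ of maximum $xy$-coordinate, namely the northernmost tile of the westernmost column of $F^i$.

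Next I would restrict the tiled neighborhood $\tiledNeighb{}(v^\star)$. The three upward neighbors are empty by \propertyOne{}; the four neighbors $v^\star + \N$, $v^\star + \NW$, $v^\star + \SW$, $v^\star + \DNW$ all have $xy$-coordinate strictly greater than $xy(v^\star)$, so by the contrapositive of \propertyTwo{} they are empty as well. This already rules out \cref{case:d,case:e}, both of which require a tile at $\DNW$. The surviving branches are: $v^\star$ is removable and \buildPar{} is invoked, or one of \cref{case:a,case:b,case:c} applies. In the removable branch I would argue that the aligned-parallelogram structure of $F^i$ forces \buildPar{} to encounter no violating empty node -- the empty $\N$-neighbor of $v^\star$ is excluded from violating condition~(2) by its ``not $\N$ of the westernmost column'' exemption, while the remaining three conditions fail columnwise -- so the procedure completes at the $\S$-neighbor of the easternmost column and triggers \proj{}. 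I would then set $F^{i+1} := \{w + \DSW \mid w \in F^i\}$, use \propertyTwo{} together with the fact that $xy$ uniquely determines $(x,y)$ within a single layer to show that no tile in layer $z(F^i) - 1$ lies outside $F^{i+1}$, and conclude that \propertyOne{}--\propertyThree{} transfer from $F^i$ to $F^{i+1}$ by translation invariance of $xy$ under $\DSW$; a short search-phase argument then places $p^{i+1}$ in $F^{i+1}$. For each of \cref{case:a,case:b,case:c} I would take $F^{i+1}$ to be the fragment containing the agent after the explicit post-shift motion specified by the case and verify the three properties by the same bookkeeping: one top $xy$-coordinate of $F^i$ is either lopped off or transplanted one layer down, and \propertyTwo{}/\propertyThree{} are preserved because the relocated tile's $xy$-coordinate was already represented in $F^i$.

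The main obstacle I anticipate lies in the projection branch when the column of some $w \in F^i$ contains a gap beneath $w$: the projected tile from $w$ then lands several layers below $w + \DSW$, which both threatens the connectivity of the destination fragment and leaves the agent far below the intended new platform. The resolution is that \propertyTwo{} forces every pre-existing tile in layer $z(F^i) - 1$ to have $(x,y)$-coordinates matching some element of $F^i$, so after projection the set $\{w + \DSW \mid w \in F^i\}$ is completely tiled regardless of the gaps, and the agent's subsequent $\UNE$-steps in the next search phase walk it up the column in which the last tile was placed until it rejoins this fragment. Verifying that all four violating-node conditions fail inside \buildPar{} for a partially filled westernmost column is the next most delicate piece; every other subcase reduces to a brief local check.
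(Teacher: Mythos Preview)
Your projection branch is essentially the paper's argument and is fine: once you know $v^\star$ is removable and \buildPar{} finds no violating node, \proj{} fires and $F^{i+1}=\{w+\DSW\mid w\in F^i\}$ inherits \allProperties{} by translation, with \propertyTwo{} guaranteeing no stray tiles in the new layer.

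The gap is in your treatment of \cref{case:a,case:b,case:c}. Your bookkeeping for \propertyTwo{} only tracks the \emph{relocated} tile's $xy$-coordinate; it ignores the pre-existing tile at $v^\star+\DSW$. Concretely, take width $\ge 2$, height $\ge 2$, and the westernmost column of $F^i$ consisting of a single tile, so $v^\star=(x_{max},y_{min},z)$. Then $v^\star+\SE$ and $v^\star+\NE$ lie in $F^i$ (their $xy$-coordinates are $M-h$ and $M-h+1$), while $v^\star+\S$ is outside the cylinder. If $v^\star+\DSW$ is tiled and $v^\star+\DE$ is empty, then $\tiledNeighb(v^\star)=\{\SE,\NE,\DSW\}$ with $\DSW$ isolated, so $v^\star$ is not removable and \cref{case:b} fires. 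After the shift, your $F^{i+1}=F^i\setminus\{v^\star\}$ has $xy$-range $[0,M-1]$, but the tile at $v^\star+\DSW$ still sits at $xy$-coordinate $M$: \propertyTwo{} fails outright. The sentence ``the relocated tile's $xy$-coordinate was already represented in $F^i$'' is true but addresses the wrong tile; it is the $\DSW$-neighbour that breaks covering, and that tile is exactly what forced non-removability in the first place. The same issue arises in \cref{case:c} at height one.

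The paper sidesteps this by arguing that (for height $\ge 2$) $v^\star$ is \emph{always} removable, so \cref{case:a,case:b,case:c,case:d,case:e} never fire and only the \buildPar{}/\proj{} branch needs analysis; the height-one situation is isolated as a special case. To follow that route you would need to push your neighbourhood analysis further and exclude \cref{case:a,case:b,case:c} as well, not handle them. As written, your proposed verification of \allProperties{} in those branches does not go through.
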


\begin{proof}
    As we will show in the next paragraph, all tiles at nodes in $F^i$ must have been projected between step $i$ and $i+1$ such that each node directly $\DSW$ of any node in $F^i$ is tiled in step $i+1$.
    Let $F$ be the fragment that contains all nodes directly $\DSW$ of $F^i$ in step $i+1$, i.e.,  $F \supseteq \{v \mid v + \UNE \in F^i\}$.
    Since $F^i$ satisfies \propertyTwo{}, there does not exist a node in $F$ with an $xy$-coordinate that was not contained in $F^i$, i.e., $F = \{v \mid v + \UNE \in F^i\}$.
    It follows that $F$ satisfies \propertyTwo{} and \propertyThree{} as well.
    Since in step $i+1$, each node in $F^i$ is empty, $F$ must be a platform (\propertyOne{}).
    This implies that after the projection of $F^i$, the agent cannot leave $F$ to any layer above.
    It follows that $F^{i+1} = F$, which concludes the lemma.

    Assume that $F^i$ has height at least two.
    In the special case where $F^i$ is a parallelogram of height one (refer to \cref{subfig:criticalCases-R}), the agent can enter \cref{case:c} instead of the \buildPar{} procedure at node $v$, and subsequently either shift the tile from $v$ to $v + \SE + \DSW$ or proceed with the projection of $F^i$.
    Note that since $F^i$ satisfies \propertyTwo{}, each fragment below $F^i$ must also have height one, which implies (1) that the configuration is already an icicle, and (2) that no tile is ever shifted outside of \cref{case:c} and the \proj{} procedure, such that the proof is analogous with slightly different notation.
    Hence, the assumption that $F^i$ has height at least two does not lose generality.
    The projection of $F^i$ between step $i$ and $i+1$ can be deduced as follows:
    The agent cannot leave $F^i$ to a layer above since $F^i$ is a platform (\propertyOne{}).
    Since $F^i$ is a covering fragment (\propertyTwo{}), and all \allCases{} necessitate a neighboring tile below that is not covered by $F^i$, it follows that the northernmost node $v$ of its westernmost column is removable and the \buildPar{} procedure must be entered.
    Lastly, since $F^i$ is an aligned parallelogram (\propertyThree{}), in the \buildPar{} procedure, where the agent visits all nodes of $F^i$ in descending order of their $xy$-coordinates, it cannot find a violating node.
    Consequently, the agent must proceed to the \proj{} procedure.
\end{proof}

\begin{lemma}\label{lem:convergenceIntermediate}
    For each $i \geq 0$ there is a step $j > i$ such that (1) $\Phi^i > \Phi^j$ or (2) $p^j \in F^j$ where $F^j$ is a fragment of configuration $C^j$ that satisfies \allProperties{}.
\end{lemma}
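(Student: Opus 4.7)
I would argue by contradiction: suppose that for every $j > i$, $\Phi^j \geq \Phi^i$ and $p^j$ is not in a fragment satisfying \allProperties{}. By monotonicity of \cref{lem:potential1Monotonicity}, $\Phi^j = \Phi^i$ throughout. Since each of \allCases{} and any tile shift inside \buildPar{} strictly decreases $\Phi$, the agent must always stop the search phase at a removable node $v$, and the only tile-shifting operations available are $\Phi$-preserving projections invoked by \proj{} once \buildPar{} completes without a violation. In particular, any projected fragment $F^j$ must be both a platform (\propertyOne{}) and a partially filled aligned parallelogram (\propertyThree{}), so only the covering property (\propertyTwo{}) can fail under our assumption.

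I would split the remaining analysis into two subcases. If only finitely many projections occur after step $i$, then for all sufficiently large $j$ the configuration $\tiled{}^j$ is fixed and the agent moves without shifting tiles. The search-phase directions $\UW, \USE, \UNE, \NW, \SW, \N$ all satisfy $\Delta x \geq 0$ (by \cref{eq:transition}), so $x(p^j)$ is non-decreasing; being bounded by the bounding cylinder (\cref{lem:maintainBoundingCylinder}), it must stabilize. Once stable, \buildPar{} can neither transition back to the search phase (a more western column would force $x$ to strictly increase) nor shift a tile nor complete (which would invoke \proj{}), yielding a contradiction. Otherwise, infinitely many projections occur. Each projection preserves the set $Q := \{(x(v), y(v)) : v \in \tiled{}\}$ of occupied $(x,y)$-coordinates, since every tile moves only in the $\DSW$ direction. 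Moreover, connectivity (\cref{lem:connectivityIsMaintained}) together with the finite tile count bounds the $z$-range $z_{\max} - z_{\min}$ of the tile set by $n - 1$ at every step. So if $F^j$ fails \propertyTwo{} with some outside tile $u$ having $(x(u), y(u))$ not in $F^j$'s $(x,y)$-footprint, then after at most $n - 1$ further projections $F^j$ descends below $z(u)$, forcing the next search to land in a strictly different top fragment. Since only finitely many fragment shapes fit within the bounding cylinder, some top fragment must recur infinitely often; the $z$-range bound together with $Q$-invariance then forces this recurring fragment to cover all of $Q$, yielding \propertyTwo{} and the sought contradiction.

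The principal difficulty I foresee is rigorously establishing, in the second subcase, that the recurring top fragment must cover $Q$: this requires carefully combining $Q$'s invariance under projections, the $n - 1$ bound on the tile set's $z$-range, and the determinism of the agent's search/\buildPar{}/\proj{} loop to rule out any $(x,y)$-coordinate of $Q$ persistently outside the recurring fragment's footprint.
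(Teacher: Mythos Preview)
Your overall framework---assume $\Phi$ is constant, deduce that only projections can move tiles, and note that any fragment about to be projected is a platform and (locally) a parallelogram---matches the paper's setup. Two points need repair.

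Subcase~1 is incomplete: tracking only $x(p^j)$ does not suffice. \buildPar{} (and also \cref{case:d,case:e}) can return to the search phase by finding a tile in an upper layer, which increases $z$ but not $x$, or by finding a more northern column with the same $x$-coordinate as the current one, which keeps $x$ fixed but increases $xy$. The paper instead argues that whenever no tile is shifted between two consecutive search-phase exits, either $z(p^{i+1}) > z(p^i)$ or $xy(p^{i+1}) > xy(p^i)$; since the agent never descends outside \proj{}, $z$ is non-decreasing and bounded, and once $z$ stabilises $xy$ must keep strictly increasing, which is also bounded. (A related minor inaccuracy: the projected fragment satisfies \propertyThree{} only relative to its own bounding box, not necessarily relative to $\bc(\tiled{})$; the paper notes this explicitly and observes that for a parallelogram, \propertyTwo{} implies \propertyThree{}, so the failure still reduces to \propertyTwo{}.)

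The substantive gap is subcase~2. Your pigeonhole on recurring fragment shapes is not the right lever, and the step you yourself flag as the principal difficulty does not go through as stated: $Q$-invariance and the $z$-range bound alone do not preclude an uncovered $(x,y)$-coordinate persisting outside the agent's footprint indefinitely, nor is the agent's fragment guaranteed to be the global top. The paper's key idea, absent from your plan, is a \emph{monotone growth} argument rather than a recurrence argument. After each projection the new fragment contains the full shadow $\{v + \DSW : v \in F\}$, so its $xy$-footprint never shrinks; and connectivity forces it to \emph{strictly} grow within boundedly many projections. Concretely, since $F^i$ fails \propertyTwo{}, \cref{lem:connectivityIsMaintained} gives a path in $G|_{\tiled{}}$ from $F^i$ to a tile whose $xy$-coordinate is uncovered; because $F^i$ is a platform, the first node $u$ on this path with $xy(u)$ outside the footprint lies at a level strictly below $F^i$, and after $k = |z(u) - z(F^i)|$ projections the agent's fragment reaches that layer and absorbs $u$'s $xy$-coordinate, giving $|F^{i+k}| > |F^i|$. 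Then either the enlarged fragment is no longer a platform or no longer a parallelogram (so $\Phi$ drops, contradiction), or one repeats the argument with a strictly larger footprint; boundedness by $n$ terminates the process. This growth argument replaces your recurrence entirely and avoids the difficulty you identified.
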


\begin{proof}
    Let $F^i$ be the fragment in which the agent leaves phase search in step $i$, i.e., $p^i \in F^i$.
    If $F^i$ satisfies \allProperties{}, then the lemma follows directly from \cref{lem:propertiesMaintained}.
    Assume that $F^i$ does not satisfy \allProperties{}. 
    If $\Phi^i = \Phi^{i+1}$, then by \cref{lem:potential1Monotonicity} either no tile was shifted or fragment $F^i$ was projected between step $i$ and $i+1$.
    The agent does not shift a tile, whenever it finds a more western column or some fragment above, in which case $z(p^{i+1}) > z(p^i)$ or $xy(p^{i+1}) > xy(p^i)$ holds.
    Given that the agent's $z$-coordinate can increase by at most $n$, and its $xy$-coordinate by at most $n^2$, some tile must eventually be shifted.
    If the tile shift takes place outside of the \proj{} procedure, then there is nothing to prove, as the lemma directly follows from \cref{lem:potential1Monotonicity} in such cases.
    Therefore, for the remaining proof, we assume that each tile shift is executed as part of the \proj{} procedure.
    For the sake of clarity in notation, we assume that $F^i$ is projected directly between step $i$ and $i+1$.

    The projection of $F^i$ can only be initiated after an execution of the \buildPar{} procedure.
    Within this procedure, the agent visits all nodes of $F^i$ in descending order of their $xy$-coordinates, and it does not find any empty node in the bounding box of $F^i$ with an $xy$-coordinate smaller than any node of $F^i$.
    This implies that $F^i$ is locally an aligned parallelogram, satisfying \propertyThree{} with respect to configuration $C = (F^i,p^i)$.
    The agent also does not find a neighboring tile above, which implies that $F^i$ satisfies \propertyOne{}.
    Note that $F^i$ does not necessarily satisfy \propertyThree{} globally, i.e., w.r.t. $C^i = (\tiled{}^i,p^i)$.
    Consequently, $F^i$ cannot satisfy \propertyTwo{}, as any parallelogram that satisfies \propertyTwo{} must also satisfy \propertyThree{}.

    Since $F^i$ does not satisfy \propertyTwo{}, it follows that there exists a node $w \in \tiled^{i}$ such that $xy(w) \neq xy(v)$ for all $v \in F^i$.
    Given that connectivity is maintained by \cref{lem:connectivityIsMaintained}, there is a path $P$ from any node $v \in F^i$ to $w$ in $G|_{\tiled{}^i}$.
    As $F^i$ is a platform, the first node $u$ on that path, for which $xy(u)$ is distinct from any $xy$-coordinate in $F^i$, must have a smaller $z$-coordinate than any node in $F^i$.
    This implies that after precisely $k = |z(u) - z(v)|$ projections, the agent must be positioned in a larger fragment, i.e., $|F^{i+k}| > |F^i|$.

    If, in step $i+k$, $F^{i+k}$ is not a platform, then the series of projections carried out between steps $i$ and $i+k$ results in a reduction in the number of platforms, in which case the lemma follows directly.
    Otherwise, if $F^{i+k}$ is not a parallelogram in step $i+k$, the next tile shift must be executed outside the \proj{} procedure, and the lemma follows from \cref{lem:potential1Monotonicity}.
    Finally, if $F^{i+k}$ is both a platform and a parallelogram but does not satisfy \propertyTwo{}, we can once again identify a node $w'$ whose $xy$-coordinate is not covered by any node of $F^{i+k}$, and the process described above repeats for $w'$.
    The size of any fragment is upper bounded by $n$, which implies that this process cannot repeat indefinitely.
    Note that distinct nodes of the same layer must have distinct $xy$-coordinates, which implies that whenever the size of the fragment at the agent's position increases, the total number of $xy$-coordinates that are covered by that fragment also increases.
    Consequently, it follows that eventually, in some step $j$, one of the following conditions holds: (1) the number of platforms decreases, (2) the agent shifts a tile outside of the \proj{} procedure, or (3) $F^j$ satisfies \propertyTwo{}.

    Again, in the first two cases, $\Phi^i > \Phi^j$.
    In the latter case, whenever the first two cases do not occur, \properties{P1--P2} hold as well such that the lemma follows.
\end{proof}

The initial number of platforms is at most $n$, and the initial $xy$-coordinate of any tiled node is at most $n^2$. 
Hence, the initial potential is $\Phi^0 = \O(n^3)$.
Consequently, \cref{lem:propertiesMaintained} and \cref{lem:convergenceIntermediate} imply that eventually the agent is positioned in a fragment satisfying \allProperties{}.

For the second part of our analysis, dedicated to demonstrating convergence towards an icicle, we introduce another potential function $\Psi^i$. 
This function is defined as the number of empty nodes within the bounding box $\bb(\tiled{}^i)$ that have a tile somewhere in the $\DSW$ direction.
Formally, $U^i = \{v \in \bb{}(\tiled{}^i) \setminus \tiled^i \mid v + k \cdot \DSW \in \tiled{}^i \text{ for some } k > 0\}$ and $\Psi^i = |U^i|$.

\begin{lemma}\label{lem:potential2Monotonicity}
    Let $p^i \in F^i$ where $F^i$ satisfies \allProperties{}. If $\Psi^i > 0$, then $\Psi^{i+1} < \Psi^i$.
\end{lemma}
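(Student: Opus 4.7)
The plan is to reuse the argument inside the proof of \cref{lem:propertiesMaintained}: whenever $p^i$ lies in a fragment $F^i$ satisfying \allProperties{}, the only tile activity between step $i$ and step $i+1$ is a projection of $F^i$ (the agent enters \buildPar{}, cannot find a violating node, and so invokes \proj{}; all subsequent movements are in the search phase and do not shift tiles). It therefore suffices to analyze how a single invocation of \proj{} on $F^i$ affects $\Psi$.

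First, I will decompose $\Psi^i$ column by column. For each $(x,y) \in F^i$, let $Z_{x,y}^i = \{z : (x,y,z) \in \tiled{}^i\}$, and set $g^i(x,y) = z_{max}^i - \min Z_{x,y}^i + 1 - |Z_{x,y}^i|$, which counts the empty nodes in the column at heights in $(\min Z_{x,y}^i, z_{max}^i]$. By \propertyOne{}, the topmost tile of every non-empty column is at $z = z_{max}^i$; by \propertyTwo{}, only columns with $(x,y) \in F^i$ contain tiles. Together these facts yield $\Psi^i = \sum_{(x,y) \in F^i} g^i(x,y)$, since a node $v \in \bb(\tiled{}^i) \setminus \tiled{}^i$ contributes to $\Psi^i$ exactly when $v$ lies strictly above the bottommost tile in its own column.

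Second, I will examine the effect of \proj{} one column at a time. The projection moves the tile at $(x, y, z_{max}^i)$ to the first empty node in direction $\DSW$, and $z_{max}^{i+1} = z_{max}^i - 1$. If $g^i(x,y) \geq 1$, the tile lands in the topmost gap; $\min Z_{x,y}$ and $|Z_{x,y}|$ are unchanged, and a direct computation gives $g^{i+1}(x,y) = g^i(x,y) - 1$. If $g^i(x,y) = 0$, the column is a solid block $[b, z_{max}^i]$; the tile drops to $b - 1$, producing the solid block $[b-1, z_{max}^i - 1]$, so $g^{i+1}(x,y) = 0$. Summing over columns, $\Psi^{i+1} = \Psi^i - k$, where $k$ is the number of columns with $g^i(x,y) \geq 1$. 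Since $\Psi^i > 0$ forces $k \geq 1$, the claim $\Psi^{i+1} < \Psi^i$ follows.

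The main subtlety to handle is the bounding-box bookkeeping: a solid column pushes a tile below $z_{min}^i$, so $z_{min}^{i+1}$ may drop to $z_{min}^i - 1$, creating new positions inside $\bb(\tiled{}^{i+1})$ in every other column. In any column that is not fully packed, however, the bottommost tile sits strictly above $z_{min}^i$, so every such new position lies at $z < \min Z_{x,y}^i$ and has no tile below it in its column; hence it contributes nothing to $\Psi^{i+1}$. This justifies using the column-local formula for $g^{i+1}$ throughout and finishes the argument.
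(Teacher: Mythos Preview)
Your proof is correct and takes a genuinely different route from the paper's. The paper argues set-theoretically: it exhibits one element $u^i \in U^i \setminus U^{i+1}$ (the empty node of maximum $z$-coordinate) and then shows by contradiction that no $u^{i+1} \in U^{i+1} \setminus U^i$ can exist, yielding $U^{i+1} \subsetneq U^i$. You instead introduce the columnwise formula $\Psi = \sum_{(x,y)} g(x,y)$, track how each $g(x,y)$ changes under a single projection, and obtain the sharper identity $\Psi^{i+1} = \Psi^i - k$ where $k$ is the number of non-packed columns. Both rely on \cref{lem:propertiesMaintained} to reduce the step to a pure projection; your approach is more quantitative, while the paper's is more direct and avoids introducing auxiliary notation.

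Two small imprecisions are worth fixing. First, the claim ``the topmost tile of every non-empty column is at $z = z_{max}^i$'' does not follow from \propertyOne{} alone; you need \propertyTwo{} and connectivity as well (any tile strictly above $F^i$ would force a tile in $\{w+\UW,w+\USE,w+\UNE\}$ for some $w \in F^i$, contradicting \propertyOne{}). Second, and more importantly, the sentence ``in any column that is not fully packed, the bottommost tile sits strictly above $z_{min}^i$'' is false: a column can have its lowest tile at $z_{min}^i$ and still contain gaps higher up. Fortunately this claim is unnecessary: the new bounding-box positions lie at $z \le z_{min}^i - 1 < z_{min}^i \le \min Z_{x,y}^i = \min Z_{x,y}^{i+1}$ (the last equality because in a non-packed column the projected tile lands in a gap strictly above the bottom), so they have no tile below and contribute nothing. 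Equivalently, you can simply observe that $\Psi^{i+1} = \sum g^{i+1}$ holds for exactly the same reason that $\Psi^i = \sum g^i$ does, since $F^{i+1}$ again satisfies \allProperties{} by \cref{lem:propertiesMaintained}; no separate bounding-box argument is needed.
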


\begin{proof}
    Let $U^i \coloneqq \{v \in \bb{}(\tiled{}^i) \setminus \tiled^i \mid v + k \cdot \DSW \in \tiled{}^i \text{ for some } k > 0\}$ be the set from the definition of $\Psi^i$, and note that $U^i$ is not empty by assumption.
    We prove the lemma by showing that a node $u^i \in U^i$ with $u^i \notin U^{i+1}$ exists, and that $u^{i+1}\in U^i$ for all $u^{i+1}\in U^{i+1}$.

    Let $u^i \in U^i$ be a node with maximum $z$-coordinate.
    Recall that $F^i$ is a covering fragment as per \propertyTwo{} and also a platform as per \propertyOne{}.
    This impilies that no node within the bounding box $\bb(\tiled^{i})$ possesses a $z$-coordinate larger than $z(F^i)$.
    Here, $z(F^i)$ denotes the common $z$-coordinate of all nodes within $F^i$.
    Consequently, it follows that $z(u^i) \leq z(F^i)$.

    According to \cref{eq:transition}, consecutive nodes in both the $\DSW$ and $\UNE$ directions share identical $xy$-coordinates, i.e., for all $w = v + k \cdot \DSW$ it holds that $xy(w) = xy(v)$.
    For all empty nodes $v$ of $z$-coordinate $z(v) = z(F^i)$ holds that $v + k \cdot \DSW$ is an empty node for all $k \geq 0$.
    Otherwise there would exist a tiled node whose $xy$-coordinate is not covered by some node of $F^i$, which would contradict that $F^i$ satisfies \propertyTwo{}.
    This observation directly implies that for every node $u \in U^i$, there exists a corresponding node $w \in F^i$ such that $xy(u) = xy(w)$.
    Combining this with the previously established relation $z(u^i) \leq z(F^i)$, we can conclude that $z(u) < z(F^i)$ holds for all nodes $u \in U^i$.

    Consider $w = u^i + k \cdot \UNE$, a node within $F^i$ with $xy(w) = xy(u^i)$.
    Given that $u^i$ is a node with the maximum $z$-coordinate in $U^i$, we can deduce that $u^i + j \cdot \UNE$ is a tiled node for all $k \geq j > 0$.
    The proof of \cref{lem:propertiesMaintained} establishes that if $F^i$ satisfies \allProperties{}, then it undergoes projection between steps $i$ and $i+1$.
    As a result of that projection, a specific tile shift is performed, precisely from node $w$ to node $u^i$.
    Consequently, this tile shift implies that $u^i$ is no longer part of $U^{i+1}$, which concludes the first part of the proof.

    To prove that a projection cannot introduce any nodes to $U^{i+1}$, we begin by considering an arbitrary node $u^{i+1} \in U^{i+1}$ and assume, for contradiction, that $u^{i+1} \notin U^i$.
    In step $i+1$, as per definition of $U^{i+1}$, there exists a tiled node $v = u^{i+1} + k \cdot \DSW$ for some $k > 0$.
    The absence of $u^{i+1}$ in $U^i$ leads to two possible cases: either (1) $u^{i+1}$ was not empty in step $i$, or (2) $v$ was not tiled in step $i$.

    In the first case, since the only tiles picked up between steps $i$ and $i+1$ are at nodes of $F^i$, we conclude that $u^{i+1} \in F^i$.
    By definition, $u^{i+1} \in \bb(\tiled^{i+1})$, implying the existence of a tiled node $w$ in step $i+1$ with $z(w) \geq z(F^i)$.
    However, in step $i$, there cannot be a tiled node with $z$-coordinate greater than $z(F^i)$.
    The fact that each node in $F^i$ is empty in step $i+1$ leads to the contradiction that, in step $i$, there are two distinct fragments with $z$-coordinate $z(F^i)$, violating \propertyTwo{}.

    In the second case, between steps $i$ and $i+1$, a tile must have been shifted from some node $w \in F^i$ to $v = u^{i+1} + k \cdot \DSW$ where $v$ is the first empty node $\DSW$ of node $w$.
    This implies $xy(w) = xy(u^{i+1})$.
    Similar to the first case, in step $i$, no node in $\bb(\tiled^{i})$ has a $z$-coordinate greater than $z(F^i)$.
    In step $i+1$, each node in $F^i$ is empty, leading to the conclusion that every node in $\bb(\tiled^{i+1})$ must have a $z$-coordinate smaller than $z(F^i)$.
    Combining this with $xy(w) = xy(v) = xy(u^{i+1})$, we derive $z(v) < z(u^{i+1}) < z(w)$.
    This contradicts that $v$ must be the first empty node $\DSW$ of $w$.

    Both cases lead to a contradiction, implying that a node $u^{i+1} \in U^{i+1}$ with $u^{i+1} \notin U^i$ does not exist.
    This completes the second part of the proof and, consequently, the lemma.
\end{proof}

Once the agent enters a configuration where it is positioned within a fragment satisfying \allProperties{}, it consistently remains within such a fragment in subsequent configurations according to \cref{lem:propertiesMaintained}.
By \cref{lem:convergenceIntermediate}, such a configuration must eventually be reached, and by the previous lemma, our second potential $\Psi^i$ is strictly monotonically decreasing afterwards.
It follows that the set of empty nodes within $\bb(\tiled{}^i)$ that have a tile somewhere in the $\DSW$ direction must eventually be empty.
Consequently, any tiled node within the bounding box that is not contained in the singular uppermost fragment satisfying \allProperties{} must possess a neighboring tile at $\UNE$.
Hence, the entire configuration satisfies \cref{def:icicle}, which is captured by the following theorem, serving as the conclusion of our analysis:

\begin{theorem}\label{thm:convergence}
    The sequence of configurations resulting from the execution of \icicleAlgo{} on any initially connected configuration $C^0 = (\tiled^0,p^0)$ with $p^0 \in \tiled^0$ converges to an icicle.
\end{theorem}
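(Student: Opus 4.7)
The plan is to combine the two potential-function arguments already built up by the preceding lemmas into a single convergence argument. Let $\Phi^i$ and $\Psi^i$ denote the two potentials introduced in the analysis. The key observation is that $\Phi^0 = \mathcal{O}(n^3)$, since the initial number of platforms is at most $n$ and the $xy$-coordinate of any tiled node is at most $n^2$, and that $\Phi^i \geq 0$ for all $i$ by \cref{lem:potential1Monotonicity}. I would therefore first establish that the agent eventually enters a configuration in which it resides in a fragment satisfying \allProperties{}.

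To obtain this intermediate convergence, I would apply \cref{lem:convergenceIntermediate} repeatedly. Each application either strictly decreases $\Phi$ or produces the desired configuration. Since $\Phi$ is a non-negative integer bounded by $\mathcal{O}(n^3)$, after finitely many iterations the second alternative must occur; let $i^\star$ be the first index at which $p^{i^\star} \in F^{i^\star}$ with $F^{i^\star}$ satisfying \allProperties{}. Then by \cref{lem:propertiesMaintained}, for every $j \geq i^\star$ the agent is positioned inside a fragment $F^j$ that also satisfies \allProperties{}. Hence from step $i^\star$ onwards the preconditions of \cref{lem:potential2Monotonicity} are satisfied at every step.

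Next I would exploit $\Psi^i$ to finish the convergence. The definition of $\Psi^i$ as $|U^i|$, where $U^i \subseteq \bb(\tiled^i)$, ensures $\Psi^i \geq 0$ and $\Psi^i \leq |\bb(\tiled^0)|$ which is finite. By \cref{lem:potential2Monotonicity}, as long as $\Psi^i > 0$ we have $\Psi^{i+1} < \Psi^i$, so after finitely many further steps we reach an index $j^\star \geq i^\star$ with $\Psi^{j^\star} = 0$. At this point there is no empty node in the bounding box that has a tile somewhere in direction $\DSW$, so every tiled node not contained in the uppermost fragment $F^{j^\star}$ has a tiled neighbor at $\UNE$. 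Combined with the fact that $F^{j^\star}$ still satisfies \allProperties{}, this is exactly \cref{def:icicle}, and the configuration is an icicle.

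The main obstacle is essentially already handled by the earlier lemmas, so the proof itself should be short: the only delicate point is justifying that the two phases compose cleanly, namely that once we are in the regime where \cref{lem:propertiesMaintained} applies, the switch from $\Phi$ to $\Psi$ as the governing potential is legitimate. This follows because \cref{lem:propertiesMaintained} provides the invariance needed for \cref{lem:potential2Monotonicity} to keep applying, and neither lemma requires $\Phi$ to decrease further. No additional case analysis on \buildPar{}, \proj{}, or \allCases{} is needed at this stage, as the cases have all been absorbed into the monotonicity lemmas.
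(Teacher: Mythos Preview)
Your proposal is correct and follows essentially the same route as the paper: first use \cref{lem:convergenceIntermediate} together with the boundedness of $\Phi$ to reach a fragment satisfying \allProperties{}, invoke \cref{lem:propertiesMaintained} to stay there, and then let \cref{lem:potential2Monotonicity} drive $\Psi$ to zero so that \cref{def:icicle} is met. The paper presents exactly this argument in the paragraph preceding the theorem statement rather than as a formal proof, so your write-up is in fact slightly more detailed than the original.
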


\section{Termination Criteria and Runtime}\label{sec:runtime}

Once the agent is positioned within a fragment satisfying \allProperties{}, it remains within such a fragment and subsequently exclusively performs projections.
In the case where the configuration is already an icicle (see \cref{def:icicle}), every tiled node in the configuration must be traversed during these projections.
This condition is essential for our termination check.
The agent maintains a flag \emph{term}, which it flags as true upon initiating a projection.
This flag only reverts to false if the agent detects any violation of \cref{def:icicle} during the ongoing projection, i.e., whenever a tiled node $v$ is observed for which $v + \UNE \notin F$ and $v + \UNE \notin \tiled{}$, where $F$ is the fragment in which the projection was initiated.
Once \emph{term} still holds after a projection, the agent terminates.
Note that the flag is reverted, even if $v + \UNE$ is tiled immediately afterwards.
As an example, if a tile shift from some node $w \in F$ to $v + \UNE$ is performed as part of a projection, then node $v$ is observed before the tile is placed at $v + \UNE$.
Although it is possible that the configuration is an icicle after tiling node $v + \UNE$, the agent cannot verify it during that projection, as it does not traverse node $v$ or any node $\DSW$ of $v$.

In general, by adhering to this termination procedure, the agent consistently performs one additional projection once the configuration converges to an icicle.
Since the algorithm only terminates following a projection in which it could observe all tiled nodes and only if, in this case, \cref{def:icicle} is satisfied, the correctness of our algorithm is established.

The algorithm's runtime can be expressed as the sum $t_{total} = t_{proj} + t_{shift} + t_{move}$, where $t_{proj}$ accounts for all steps performed during the projection subroutine, $t_{shift}$ for steps that are performed as part of some tile shift (outside of a projection), and $t_{move}$ for any remaining step.
We bound each term individually by $\O(n^3)$, which gives a runtime of $\O(n^3)$ in total.

\begin{lemma}
    \label{lem:runtimeProjection}
    The total number of steps performed during projections is $t_{proj} = \O(n^3)$.
\end{lemma}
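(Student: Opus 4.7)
The plan is to bound the cost of a single projection and amortize over all projections using the fact that projections only shift tiles in the $\DSW$ direction.

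First, I would estimate the cost of projecting a single fragment $F$. The agent processes the columns of $F$ from east to west and, within each column, the tiles from $\N$ to $\S$. For every $v \in F$, let $k_v \geq 1$ denote the distance from $v$ to the first empty node $w_v$ reached by moving $\DSW$. The associated tile shift costs $\O(k_v)$ steps: constant-cost pickup and placement, $k_v$ steps carrying the tile $\DSW$, and up to $k_v + \O(1)$ steps walking back $\UNE$ and then taking a unit transition ($\S$ within a column, or $\NW$, or the two-step $\SW,\DNW$ sequence between columns in the degenerate height-one case). Summing over $v \in F$, the total cost of projecting $F$ is $\O\bigl(|F| + \sum_{v \in F} k_v\bigr)$.

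Next, I would establish that no action of \icicleAlgo{} ever increases the $z$-coordinate of a tile. A quick inspection using \cref{eq:transition} covers every case: projections shift $\DSW$; \cref{case:b,case:c,case:d,case:e} shift to $v+\DE$ or $v+\SE+\DSW$, both of which strictly decrease $z$; and the tile shifts in \buildPar{} and in \cref{case:a} remain within a single layer. In addition, every projection strictly decreases each participating tile's $z$-coordinate by $k_v \geq 1$ by the choice of $w_v$.

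Combining these observations, each of the $n$ tiles has its $z$-coordinate confined to an interval of width $\O(n)$ throughout the execution: tiles never rise, the initial $z$-range is at most $n$, and at most $n$ tiles can be stacked in a single $xy$-column so no tile can drift arbitrarily far downwards. Consequently, the total $z$-decrease summed over all tiles is $\O(n^2)$, which bounds $\sum_{\text{proj}} \sum_{v \in F} k_v$ by $\O(n^2)$; since $k_v \geq 1$, this also bounds $\sum_{\text{proj}} |F|$ by $\O(n^2)$. Plugging these into the per-projection estimate yields $t_{proj} = \O(n^2) \subseteq \O(n^3)$. The principal technical burden lies in the exhaustive case check for $z$-monotonicity; a minor subtlety is bounding the total $z$-range a single tile can traverse, which follows from the finiteness of the tile set and the absence of any $z$-increasing action.
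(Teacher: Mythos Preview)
Your approach is genuinely different from the paper's and, if it worked, would yield the stronger bound $t_{proj}=O(n^2)$. The paper instead bounds the \emph{number} of projections by $O(n^2)$---via a platform-counting argument that separates ``initial'' platforms (at most $O(n)$ of them, each needing at most $O(n)$ projections before $|\platforms|$ drops) from ``temporary'' platforms created by \cref{case:d} (at most $O(n^2)$ of them, each needing a single projection)---and then multiplies by an $O(n)$ per-projection cost. Your per-projection estimate $O(|F|+\sum_{v\in F}k_v)$ is fine and is a refinement of the paper's $O(n)$ bound, and your $z$-monotonicity check is correct.

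The gap is in the amortization step. You assert that each tile's $z$-coordinate is confined to an interval of width $O(n)$, but the justification you give---``at most $n$ tiles can be stacked in a single $xy$-column so no tile can drift arbitrarily far downwards''---only bounds the $z$-drop incurred in a \emph{single} projection (indeed $k_v\le n$), not the cumulative drop over the whole execution. Nothing in your argument rules out the entire configuration drifting downward together: connectivity keeps the $z$-range below $n$ at every instant, but $z_{\max}^{\tiled}$ itself can decrease by one with every projection of a topmost fragment, pulling $z_{\min}^{\tiled}$ and hence every tile down with it. Concretely, to get $Z^0-Z^T=O(n^2)$ you need $z_{\max}^{\tiled^0}-z_{\max}^{\tiled^T}=O(n)$, and ``finiteness of the tile set and the absence of any $z$-increasing action'' does not deliver that---it only gives a finite bound, not a linear one. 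Establishing the missing linear bound seems to require controlling how many times the topmost layer is emptied, which is tantamount to bounding the number of projections, i.e.\ the paper's route. As written, your argument therefore does not establish the stated $O(n^3)$ bound.
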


\begin{proof}
    Consider the projection of some fragment $F$.
    During that projection, the agent visits each node $v \in F$ at most three times: at most once moving $\N$ to find the $\N$-most node at which the projection of a column is initiated, once moving $\S$ during the projection of that column, and once to pick up the tile at $v$ after $v$ is projected.
    Any tiled node $v \in \tiled \setminus F$ is visited at most twice: at most once moving $\DSW$ and at most once moving back $\UNE$.
    In total, all tiled nodes are visited at most three times, which implies that the projection of $F$ takes $\O(n)$ steps.
    In the following, we show that no more than $\O(n^2)$ projections are performed throughout the execution.

    Consider the projection of any fragment $F$ prior to the existence of a fragment that satisfies \allProperties{}.
    Note that $F$ must be a platform in order to be projected.
    We distinguish whether $F$ is a platform that results from the execution of \cref{case:d}, in which case we call $F$ a \emph{temporary} platform, or whether $F$ was already present initially, in which case we call $F$ an \emph{initial} platform.
    There are at most $\O(n)$ initial platforms, each of which requires at most $\O(n)$ projections until the number $|\platforms{}|$ of platforms reduces by one, e.g., if all nodes $v + k \cdot \DSW$ are occupied for all nodes $v \in F$ of some initial platform $F$ and all $1 \leq k \leq \O(n)$.
    It follows that at most $\O(n^2)$ projections of initial platforms can occur.
    Now we consider temporary platforms.
    Denote $v_0, ..., v_k$ the nodes whose tiles are shifted to nodes $w_0,..., w_k$ during the execution of \cref{case:d} (see \cref{subfig:criticalCases-S}).
    Let $F^*$ be the fragment that contains $v_0,...,v_k$ prior to the tile shifts, and $F_1, F_2$ the resulting fragments after the tile shift, i.e., $F^* \setminus \{v_0,...,v_k\} = F_1 \cup F_2$.
    The number of platforms increases, if either $F^*$ is already a platform, or the nodes in $F^*$ with an upwards neighboring tile are all contained in $F_1$ or all contained in $F_2$.
    As a result, the execution of \cref{case:d} can increase the number of platforms at most by one.
    W.l.o.g. let $F_1$ be projected before $F_2$.
    After the projection of $F_1$, all nodes in $\{v + \DSW \mid v \in F_1\}$ are contained in the same fragment as the nodes $w_0,...,w_k$, and that fragment is adjacent to $F_2$.
    This implies, that a single projection suffices to reduce the number of platforms by one whenever $F_1$ or $F_2$ is temporary.
    Whenever a temporary platform is created in \cref{case:d}, the $x$-coordinate of at least two tiles decreases by one.
    By \cref{lem:maintainBoundingCylinder}, the $x$-coordinate cannot be negative, which implies that it can decrease by at most $n$ for any tile.
    Thereby, \cref{case:d} can be executed at most $\O(n^2)$ times, which adds $\O(n^2)$ temporary platforms and thereby $\O(n^2)$ additional projections.
    This concludes that the total number of projections prior to the existence of a fragment that satisfies \allProperties{} is $\O(n^2)$.

    Once a fragment that satisfies \allProperties{} exists, at most $n$ additional projections are required before the configuration is an icicle, i.e., after at most $n+1 = \O(n)$ additional projections the agent terminates.
    This follows directly from the observation that the bounding boxes of tiled nodes before and after the $n+1$ projections are completely disjoint, and since we cannot create any node that violates the shape of the icicle during those projections as shown in the proof of \cref{lem:potential2Monotonicity}.

    We have shown that $\O(n^2)$ projections are performed in total, each of which requires $\O(n)$ steps.
    From this, the lemma follows directly.
\end{proof}

\begin{lemma}
    \label{lem:runtimeOther}
    Let $i$ be the first step following an arbitrary projection, and $j > i$ the next step in which a projection is initiated.
    Between step $i$ and $j$ at most $\O(n)$ steps are performed outside of tile shifts, and any tile shift from some node $v$ to $w$ takes $\O(xy(v)-xy(w))$ steps.
\end{lemma}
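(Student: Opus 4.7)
The plan is to handle the two claims of the lemma separately. For the tile-shift cost, I would observe that any non-projection tile shift from $v$ to $w$ arises either in \buildPar{} or in one of \allCases{}. In all of these, the agent reaches $w$ from the column of $v$ through a sequence of single-step moves in directions that each decrease $xy$ by at most $h$ (namely $\S$, $\SE$, $\DE$, and in \cref{case:c} an $\SE$ followed by $\DSW$). By \cref{eq:transition} this yields a path of length $\O(xy(v)-xy(w))$ from $v$ to $w$, and the shift itself consists of placing the tile at $w$, retracing this path back to $v$, and picking up the tile at $v$ -- all within $\O(xy(v)-xy(w))$ steps.

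For the non-shift steps between the two projections, my plan is an amortized charging argument: each non-shift step is charged to the tile over which the agent is positioned, and I would show that each tile is charged only $\O(1)$ times during $[i,j]$. Since at most $n$ tiles exist, the bound $\O(n)$ follows. To establish the constant charge per tile I would combine three ingredients. First, search-phase moves are exclusively in the directions $\UW, \USE, \UNE, \NW, \SW, \N$, each of which strictly increases either $z(p)$ or $xy(p)$, so within a single search phase no tile is visited twice. Second, one invocation of \buildPar{} visits each column at most a constant number of times: upon encountering a column of equal $x$-coordinate to the north while traversing a new column $c'$, the rule that $c'$ is first fully traversed in direction $\N$ before the agent re-enters the search phase west of $c'$ prevents repeated zig-zag traversals. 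Third, \allCases{} perform only $\O(1)$ non-shift steps each, except for the extra $\S$-traversal introduced in \cref{case:b}, which by the same reasoning as in \buildPar{} contributes only $\O(1)$ visits per tile.

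The main obstacle I anticipate is arguing that across the \emph{multiple} search-plus-\buildPar{} iterations separated by tile shifts within the interval $[i,j]$, the total non-shift visits to any single tile remain $\O(1)$, rather than scaling with the number of iterations. The additional $\N$-traversal inside \buildPar{} and the additional $\S$-traversal inside \cref{case:b} were added precisely for this purpose: after a tile shift the agent is forced to resume exploration in a strictly more western (or southern) region, so columns that have already been fully examined are not revisited during non-shift movement. Turning this informal monotonicity into a rigorous charging scheme -- identifying, for each non-shift step, a unique ``owner'' tile and verifying that no tile accumulates more than a constant number of charges -- is the main technical step of the proof.
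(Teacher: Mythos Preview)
Your tile-shift argument has a concrete gap. In \buildPar{}, the walk from $v$ to the violating node $w$ is not composed only of $\S$, $\SE$, $\DE$ moves: after exhausting a column the agent transitions east via $\NE$ followed by repeated $\N$ moves, and $\N$ \emph{increases} $xy$. So the path is not monotone in $xy$, and the inference ``each step changes $xy$ by at most $h$, hence the path has length $\O(xy(v)-xy(w))$'' does not go through (and even on a monotone path, ``at most $h$'' would yield a lower bound on the length, not an upper bound). The paper's argument is different: it observes that during one \buildPar{} traversal every node $u$ with $xy(w)<xy(u)<xy(v)$ is visited at most a constant number of times (twice on the way to $w$, twice on the return), which directly gives the $\O(xy(v)-xy(w))$ bound. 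For \cref{case:a,case:b,case:c} the paper notes something stronger that you will need later: each such shift costs only $\O(1)$ steps while $xy$ drops by $\Omega(h)$, leaving $\Omega(h)$ of slack in the $\O(xy(v)-xy(w))$ budget.

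For the non-shift steps, your plan diverges from the paper's. You aim to show every tile receives $\O(1)$ non-shift visits across all iterations in $[i,j]$; the paper does not attempt this. Instead, it isolates the genuinely ``free'' non-shift movement---the initial search phase and the occasions where a higher layer is discovered---as $\O(n)$ in total, and then \emph{absorbs} the remaining non-shift movement into the tile-shift budget. Concretely, the $\O(|c'|)=\O(h)$ extra traversal of the eastern neighbour $c'$ after the last tile of $c$ is removed is charged either to the $\Omega(h)$ slack of the preceding \cref{case:b} shift, or to the next \buildPar{} shift (whose $xy$-drop exceeds $|c'|$). Your $\O(1)$-per-tile scheme may be salvageable, but the obstacle you flag is real: a column $c'$ with several western neighbours is touched once per neighbour during these extra traversals, and proving these touches partition $c'$ rather than overlap requires exactly the case analysis the paper sidesteps by using the slack. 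If you pursue your route, you will essentially have to reprove that the extra $\N$/$\S$ traversals visit disjoint segments of $c'$; the paper's charging-to-slack argument avoids this entirely.
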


\begin{proof}
    We begin by addressing the latter statement of the lemma.
    Consider the number of steps performed during a tile shift from some node $v$ to $w$.
    The \buildPar{} procedure lets the agent visit nodes in decreasing order of their $xy$-coordinates.
    Nodes $u$ with $xy(v) > xy(u) > xy(w)$ are visited at most twice before placing a tile at node $w$, and at most twice again before picking up the tile at $v$.
    Additionally, two steps are required for both tile placement and retrieval.
    Consequently, the claim that the tile shift takes $\O(xy(v)-xy(w))$ steps directly follows for tile shifts during procedure \buildPar{}.

    Moving on to \cref{case:a,case:b,case:c}, a single tile is shifted in the directions $\SE, \DE$, or $\SE + \DSW$, requiring $\O(1)$ steps.
    However, as per \cref{eq:transition}, the $x$-coordinate of the tile decreases by one, implying a $\Omega(h)$ reduction in the tile's $xy$-coordinate, where $h$ is the height of the bounding cylinder of $\tiled{}$.
    This observation becomes useful below, where we use the difference between the performed $\O(1)$ steps and the reduction of the $xy$-coordinate by $\Omega(h)$ to account for the additional movement that follows \cref{case:b}.
    In the case of tile shifts in \cref{case:d,case:e}, the agent follows a two-step process: initially traversing $k$ tiles of the column in the $\S$ direction and subsequently shifting all $k$ tiles in the $\DE$ direction.
    The proportional number of steps per shifted tile is constant which concludes the latter statement of the lemma.

    Next, consider any move between step $i$ and $j$ that is not part of some tile shift.
    In step $i$, the agent repeatedly moves $\UW,\USE,\UNE,\NW,\SW$ or $\N$ in the search phase.
    Notably, none of these moves can decrease its $x$-, $y$-, or $z$-coordinate.
    Consequently, after $\O(n)$ steps, the agent must leave the search phase at some northernmost node $v$ of some column $c$.

    Subsequently, whenever the agent finds a column with the same $x$-coordinate as $c$ to the north while moving $\N$ in the column $c'$ east of $c$, it enters the northernmost adjacent column $c^*$ west of $c'$ and transitions to the search phase.
    In such instances, no tile in column $c^*$ was visited between step $i$ and $j$.
    Consequently, the tiles in column $c'$ that were visited in order to find $c^*$ did not contribute to the $\O(n)$ steps of the initial search phase, allowing us to include the additional steps taken to visit $c^*$ in the overall count.

    Apart from the initial search phase and the first visits to more northern columns of equal $x$-coordinate, the agent moves without shifting a tile in the following scenarios:
    \begin{enumerate}
        \item Whenever it encounters a layer above in the \buildPar{} procedure or \cref{case:d,case:e}.
        \item Whenever it encounters a more western column (in the same procedure and cases).
        \item After picking up a tile at $v$, where $v + \S \notin \tiled$, i.e., after removing the last tile of column $c$. In this case, the agent moves an additional $\O(|c'|)$ steps within the eastern neighboring column $c'$ of $c$.
        \item After picking up a tile at $v$ where $v + \S \in \tiled$, i.e., $v$ is not the last tile of column $c$. In this case, the agent moves a single step $\S$.
    \end{enumerate}
    First, we consider case 1, where the agent discovers an upper layer.
    Notably, outside of the projection subroutine, the agent never moves to a layer below.
    Given our assumption that no projection occurs between step $i$ and $j$, it follows that each time the agent moves to an upper layer in the \buildPar{} procedure or in \cref{case:d,case:e}, the tiles visited in that particular execution of \buildPar{} or \cref{case:d,case:e} are not revisited until after step $j$.
    This implies that all tiles can be visited only a constant number of times over all these executions, thereby giving an overhead of $\O(n)$ steps for case 1.

    Similarly, excluding tile shifts, the agent exclusively moves to the east after completely removing all tiles from a column $c$.
    We argue that in such instances, the column $c'$ east of $c$ cannot have any more neighboring tiles to the west.
    Moreover, except for the initial search phase in step $i$, more western columns are only found following the additional column traversal in \cref{case:b}.
    Afterwards, we account for the overhead steps resulting from that additional column traversal using the constant runtime for the tile shift in \cref{case:b}, as explained in the first paragraph of this proof.
    Consequently, case 2 does not introduce any additional steps that are not already accounted for by the constant runtime tile shifts.

    To elaborate further, let's focus on case 3.
    There are two scenarios in which the agent picks up the tile at the last node $v$ of some column $c$.
    Firstly, during the execution of \cref{case:b}, after which the eastern neighboring column $c'$ of $c$ is traversed in the $\S$ direction, either leading the agent to a western neighboring column of $c'$ after $\O(|c'|)$ steps, or positioning it at the northernmost tile of column $c'$ if no such column is found.
    We can count the $\O(|c'|) = \O(h)$ steps that follow the execution of \cref{case:b} to the constantly many steps required for \cref{case:b} (which we have shown in the first paragraph of this proof).
    The second scenario is after shifting a tile in the \buildPar{} procedure.
    In this case, node $v + \SE$ must be the southernmost node of column $c'$.
    We can further distinguish two cases: (1) the tile at $v$ is shifted to the empty node $w$ that is north of column $c'$, or (2) the tile is shifted to some node $w$ further east than column $c'$.
    Note that in both cases, $c'$ cannot have any further neighboring column to the west, as otherwise, the agent would not have fully traversed column $c'$ in the $\N$ direction.
    In the former case, the tile at node $w$ is shifted to some node $u$ afterward.
    Node $u$ is either located east of column $c'$ or directly south of $c'$.
    In either case, it holds that $xy(w) - xy(u) > |c'|$, and we can count the $\O(|c'|)$ steps to the subsequent tile shift from $w$ to $u$.
    In the latter case, we can count the $\O(|c'|)$ steps directly to the tile shift from $v$ to $w$, as in this case $xy(v)-xy(w) > |c'|$.

    To summarize, each overhead step resulting from case 3 is effectively covered by the constant runtime of the tile shift in \cref{case:b}.
    Whenever a more western column is found outside of the additional column traversal that follow \cref{case:b}, that column is visited for the first time between step $i$ and $j$, in which case we can count the steps necessary to reach that column to the initial $\O(n)$ steps in phase search.
    Hence case 2 does not produce additional overhead.

    Finally, we simply account for the single step from node $v$ to node $v + \S$ following case 4 as part of the preceding tile shift.
\end{proof}

As established in the proof of \cref{lem:runtimeProjection}, a total of $\O(n^2)$ projections are performed., which together with \cref{lem:runtimeOther} implies that $t_{other} = \O(n^3)$.
The $xy$-coordinate of any tile is at most $n^2$, non-increasing, and cannot be negative (see \cref{lem:potential1Monotonicity,lem:maintainBoundingCylinder}).
Together with \cref{lem:runtimeOther}, each tile contributes $\O(n^2)$ steps to $t_{shift}$, which implies that $t_{shift} = \O(n^3)$.
This concludes our final theorem:

\begin{theorem}
    \label{thm:icicleRuntime}
    \icicleAlgo{} has a runtime of $\O(n^3)$ steps.
\end{theorem}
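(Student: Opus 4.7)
The plan is to follow the decomposition $t_{total} = t_{proj} + t_{shift} + t_{move}$ already set up in the discussion preceding the theorem and bound each summand by $\O(n^3)$ using the two preceding lemmas together with a simple amortization argument. First, Lemma~\ref{lem:runtimeProjection} immediately yields $t_{proj} = \O(n^3)$, so no further work is needed for that summand. The proof of Lemma~\ref{lem:runtimeProjection} also delivers the auxiliary fact that the total number of projections executed over the entire run is $\O(n^2)$, which is the key numeric input for the other two bounds.

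Second, to bound $t_{move}$, I would partition the timeline into blocks demarcated by consecutive projections (treating the initial prefix, before the first projection, and the final suffix in the same way). Lemma~\ref{lem:runtimeOther} guarantees that within any such block the agent performs at most $\O(n)$ steps outside of tile shifts. Multiplying by the $\O(n^2)$ projections bound gives $t_{move} = \O(n^2) \cdot \O(n) = \O(n^3)$.

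Third, to bound $t_{shift}$, I would amortize against the $xy$-coordinate of individual tiles. By Lemma~\ref{lem:runtimeOther}, a tile shift from $v$ to $w$ contributes $\O(xy(v) - xy(w))$ steps, and this contribution is naturally charged to the particular tile being shifted. From Lemma~\ref{lem:maintainBoundingCylinder} and the definition of the $xy$-coordinate, we have $0 \leq xy(v) \leq n^2$ for every tile throughout the execution, and the monotonicity of $\sum_v xy(v)$ established in Lemma~\ref{lem:potential1Monotonicity}, together with the observation that projections preserve each tile's $xy$-value while other shifts strictly decrease it, guarantees that the sequence of $xy$-values of any fixed tile across its successive shifts is strictly decreasing and bounded in $[0, n^2]$. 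Telescoping over the shifts of a single tile yields a total contribution of $\O(n^2)$ per tile, and summing over the $n$ tiles produces $t_{shift} = \O(n^3)$.

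Adding the three bounds gives the desired $\O(n^3)$ bound on $t_{total}$. I do not expect any serious obstacle here, because Lemmas~\ref{lem:runtimeProjection} and~\ref{lem:runtimeOther} have already absorbed the hard accounting (in particular the subtle charging of the extra column traversals in Case~\ref{case:b} and of the move phase after the removal of the last tile of a column). The only point that deserves explicit care in the write-up is justifying that the telescoping amortization over shifts of a single tile is valid, i.e.\ that projections preserve $xy$ while tile shifts strictly decrease it, which is precisely what the proof of Lemma~\ref{lem:potential1Monotonicity} establishes case by case.
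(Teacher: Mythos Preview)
Your proposal is correct and follows essentially the same approach as the paper: it bounds $t_{proj}$ directly via Lemma~\ref{lem:runtimeProjection}, multiplies the $\O(n^2)$ projection count by the $\O(n)$ per-block bound from Lemma~\ref{lem:runtimeOther} for $t_{move}$, and telescopes the per-shift cost $\O(xy(v)-xy(w))$ against the $\O(n^2)$ range of $xy$ per tile for $t_{shift}$. The only cosmetic difference is that the paper phrases the last step as ``each tile contributes $\O(n^2)$'' without spelling out the telescoping, whereas you make it explicit; both rely on the same facts from Lemmas~\ref{lem:maintainBoundingCylinder} and~\ref{lem:potential1Monotonicity}.
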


\section{Experimental Analysis}\label{sec:experimental}
\begin{figure}[!t]
    \centering%
    \includegraphics[width=\linewidth]{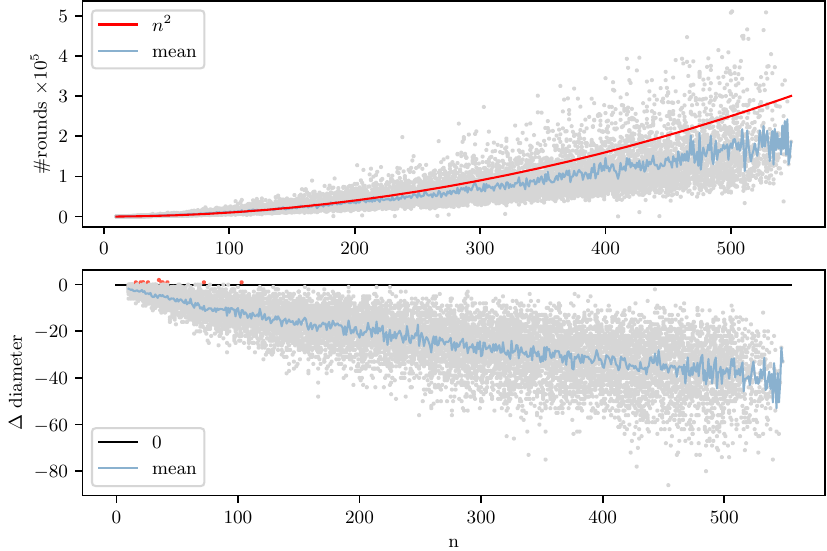}%
    \caption{
        The results stem from 12,250 simulations involving random configurations ranging in sizes from 10 to 550. The upper plot shows the number of steps until termination. The lower plot shows the difference in diameter between the input and output configurations.}
    \label{fig:plots}
\end{figure}
While our algorithm matches the runtime bound of the 3D line formation algorithm \cite{Hinnenthal2020Hybrid3D}, the icicle offers distinct advantages over the line.
The diameter of an icicle can be as low as $\O(n^{\frac{1}{3}})$, whereas a line consistently maintains a diameter of $n$.
Unfortunately, our algorithm does not improve the diameter if the initial configuration already closely resembles a line.
On the other hand, we conjecture that if the initial diameter is as low as $\O(n^{\frac{1}{3}})$ (which is the best case in 3D), then our algorithm can only increase the diameter by a factor of $\O(n^{\frac{1}{3}})$.
We support our conjecture with the following simulation results on configurations where initially all tiles are contained in a sphere of radius $\O(n^{\frac{1}{3}})$.
We conducted a total of 12,250 simulations using the icicle formation algorithm on random configurations. 
For each value of $n$ within the range $10 \leq n < 500$, we sampled 25 random configurations as follows: empty nodes were repeatedly chosen uniformly at random within a sphere of radius $4n^{\frac{1}{3}}$, and a tile was placed on each selected node until a connected component of tiled nodes with a size of at least $n$ was formed. 
Subsequently, any tile outside of that component was removed, the agent was placed at a randomly chosen tile within the component, and the algorithm was simulated until termination.
We measured the runtime as well as the difference in diameter, which are plotted in Figure \ref{fig:plots}.
Due to the nature of the described random generation process, configurations of size larger than 500 were also sampled, although less frequently.
Specifically, we observed an average sampling rate of approx. 24.4 configurations for sizes at most 450, contrasting with approx. 15.4 configurations for sizes exceeding 450. 
This discrepancy contributes to the noticeable increase in variance as the configuration size approaches the 500 threshold.
The runtime stays below the proven upper bound of $\O(n^3)$, and notably, it remains well in the vicinity of $n^2$.
This can be attributed to the initial close packing of all tiles in our random configurations. 
Additionally, instances where the diameter increases (indicated by red dots above the $x$-axis) are infrequent, and their occurrence diminishes as the configuration size increases.
This trend implies a general decrease in diameter in the average case.
In the following, we present what we believe to be the worst-case configuration.
Consider the configuration $C$ depicted in \cref{subfig:worstCase1} that consists of three layers.
The middle layer contains $k = \Theta(n^{\frac{1}{3}})$ fragments $F_1,...,F_k$ ordered from east to west, where each $F_i$ has size $\O(i)$ and the agent's initial position is $p^0 \in F_1$.
It further contains a fragment $F_0$ of size one east of the agent's initial position.
Observe that the bounding box of $F_i$ contains no node from $F_{i+1}$ for any $i$ with $0 < i < k$.
It follows that the agent builds and projects parallelograms in the order $F_1,...,F_k$.
Since the bounding box of $F_i$ contains $p^0$ for all $i > 0$, it further follows that $k$ tiles are projected from $p^0$ in direction $\DSW$.
Only then, the agent traverses the lower layer and eventually finds fragment $F_0$ where it moves further upwards.
Now consider the configuration that consists of $\Theta(n^{\frac{1}{3}})$ copies of $C$ in direction $\UNE$ (see \cref{subfig:worstCase2}).
That configuration has diameter $\O(n^{\frac{1}{3}})$ initially.
Throughout the icicle formation algorithm, some tile at node $p^0$ is projected $\Theta(n^{\frac{2}{3}})$ times, which implies that the resulting icicle has depth and thereby also diameter $\Theta(n^{\frac{2}{3}})$.
\begin{figure}[t]
    \centering
    \begin{subfigure}[b]{0.5315\linewidth}
        \centering%
        \includegraphics[width=0.9\linewidth,page=1]{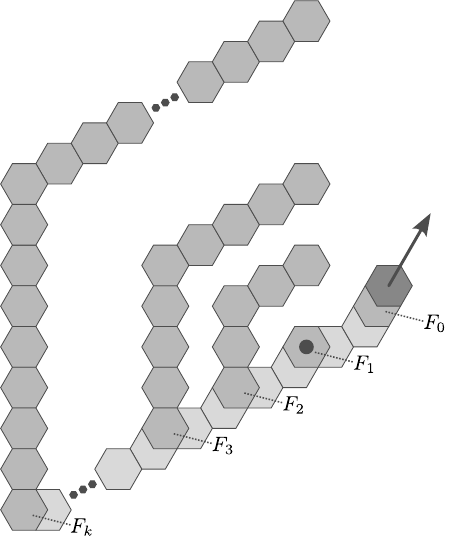}%
        \subcaption{}
        \label{subfig:worstCase1}
    \end{subfigure}%
    \begin{subfigure}[b]{0.4685\linewidth}
        \centering%
        \includegraphics[width=0.9\linewidth,page=2]{worstCase}%
        \subcaption{}
        \label{subfig:worstCase2}
    \end{subfigure}%
    \caption{Illustrating what we believe to be the worst case configuration in terms of increase in diameter. In (a), the three lowest layers of the configuration are depicted in detail. The second-lowest layer contains $k = \Theta(n^{\frac{1}{3}})$ fragments $F_i$, each of size $\O(i)$, and the agent's initial position $p^0 \in F_1$. Observe that the bounding box of each $F_i$ contains $p^0$. The whole configuration is depicted in (b) and consists of $\Theta(n^{\frac{1}{3}})$ copies of the layers depited in (a) in direction $\UNE$ (indicated by the arrows).}
    \label{fig:worstCase}
\end{figure}
\section{Future Work}\label{sec:future}

In this work, we introduced an algorithm capable of transforming any initially connected configuration into an icicle within $\O(n^3)$ steps, complemented by proofs of correctness and runtime analysis. 
While our algorithm's experimental results are promising, future work should include a formal proof to substantiate the claimed upper bound of $\O(n^{\frac{1}{3}})$ on the increase in diameter.
Additionally, the adaptability of our algorithm to the multi-agent case poses an intriguing challenge for future investigation. 
Given that the algorithm comprises distinct phases potentially executed in an interleaved manner, addressing its integration into a multi-agent framework presents a non-trivial research direction. 

\bibliography{bibliography}
  \newpage
\appendix
\let\oldnl\nl
\newcommand{\nonl}{\renewcommand{\nl}{\let\nl\oldnl}}
\SetKwRepeat{Do}{do}{while}
\newcommand{\lIfElse}[3]{\lIf{#1}{#2 \textbf{else}~#3}}

\section{Deferred Pseudocode}\label{sec:appendix-pseudocode}

\begin{algorithm}[!b]
    \caption{\textsc{2DParallelogramFormation}}
    \label{alg:parallelogram}
     \While{true}{
         \lWhile{$\{p + \NW,p + \SW,p + \N\} \cap \tiled{} \neq \emptyset$}{
             move to tile at $\NW, \SW$ or $\N$ 
        }
         $firstColumn \leftarrow$ true; run \buildPar{}\;
         \lIf(\Comment*[f]{terminate $\S$ of easternmost column}){$p \notin \tiled{}$}{
            \Return
        }
         \ElseIf{$r$ carries no tile}{
             \uIf{$firstColumn$}{
                 \lWhile{$p + \N \in \tiled{}$}{move $\N$}
            }
             \Else{
                 \lWhile{$\{p + \SW,p + \S\} \cap \tiled{} \neq \emptyset$}{
                     move to tile at $\SW$ or $\S$
                }
                 \lWhile{$\{p + \NW,p + \SW,p + \N\} \cap \tiled{} \neq \emptyset$}{
                     move to tile at $\NW, \SW$ or $\N$
                }
            }
             pickup tile; move to tile at $\S, \SE$ or $\NE$ \;
        }
    }
    \BlankLine
    \nonl\textbf{procedure} \buildPar\;
     \While{$p\in T$}{
         \uIf(\Comment*[f]{irrelevant in 2D}){$p + \UW \in \tiled{}$ or $p + \USE \in \tiled{}$ or $p + \UNE \in \tiled{}$ }{
             move to tile at $\UW, \USE$ or $\UNE$ ; \Return\;
        }
         \uElseIf{$firstColumn$ \textbf{and} $p + \SW \in \tiled{}$}{
             move $\SW$; \Return
            \Comment*[r]{found more western column}
        }
         \uElseIf{$p + \NE \in \tiled{}$ \textbf{and} $p + \SE \notin \tiled{}$}{
             move $\SE$; place tile; move $\NW$; \Return
            \Comment*[r]{place tile below eastern column}
        }
         \ElseIf{$p + \N, p + \SE \in \tiled{}$ \textbf{and} $p + \NE \notin \tiled{}$}{
             move $\NE$; place tile; move $\SW$; \Return
            \Comment*[r]{place tile above eastern column}
        }
         move $\S$\;
    }
     \uIf{$p + \N, p + \NE, p + \SE \in \tiled{}$}{
         place tile; move $\N$
        \Comment*[r]{place tile below current column}
    }
     \ElseIf{$p + \NE \in \tiled{}$}{
         move $\NE$; move $\N$; $firstColumn \leftarrow false$
        \Comment*[r]{move to top of next column}
         \While{$p \in \tiled{}$}{
             \If{$p + \SW \notin \tiled{}$ and $p + NW \in \tiled{}$}{
                 \lWhile{$p + \N \in \tiled$}{move $\N$}
                 \lWhile{$p + \NW \notin \tiled$}{move $\S$}
                 \Return  \Comment*[r]{found more northern column}
            }
             move $\N$
        }
         \lIf(\Comment*[f]{place tile above current column}){$p + \S, p + \SE \in \tiled{}$}{
            place tile
        }
         \lElse{
            move $\S$; run \buildPar{}
        }
    }
     \Return

\end{algorithm}

The pseudocode for the 2D parallelogram formation algorithm, as detailed in \cref{subsec:algorithm-parallelogram}, is given by \cref{alg:parallelogram}.
Specifically, lines 12--34 within \cref{alg:parallelogram} describe the \buildPar{} procedure, utilized by both the parallelogram and icicle formation algorithms.
Note that the checks for tiles above (lines 13--14) can be disregarded in the 2D setting, as they only become relevant in the icicle formation algorithm.
The \proj{} procedure is given in \cref{alg:project}, and the full icicle formation algorithm in \cref{alg:icicle}.
Whenever multiple directions of movement are specified, their precedence is implicit in the provided order.

\begin{algorithm}[!t]
    \caption{\icicleAlgo}
    \label{alg:icicle}
     \While{true}{
         \While{$\{p+ \X \mid \X \in \{\UW,\USE,\UNE,\NW,\SW,\N\}\} \cap T \neq \emptyset$}{
             move to tile at $\UW, \USE, \UNE, \NW, \SW$ or $\N$ \;
        }
         \uIf{$G|_{\tiledNeighb{}(p)}$ or $G|_{\tiledNeighb{}(p)\cup\{p + \SE\}}$ is connected \textbf{ or } $G|_{\tiledNeighb{}(p)\cup\{p + \SE + \DSW\}}$ is connected with $p + \SE + \DSW \in \tiled{}$}{
             $firstColumn \leftarrow$ true; run \buildPar{}\;
             \lIf{$p \notin \tiled{}$}{
                move $\N$; run \proj{}
            }
             \lElseIf(\Comment*[f]{same as lines 8--15 from \cref{alg:parallelogram}}){$r$ carries no tile}{...}
            \setcounter{AlgoLine}{15}
        }
         \uElseIf{$G|_{\tiledNeighb{}(p)\cup\{p + \DE\}}$ is connected}{
             \uIf{$\tiledNeighb{}(p) = \{p+\DSW, p + \SE\}$}{
                 move $\SE + \DSW$; place tile; move $\UNE + \NW$; pickup tile; move $\SE$ \;
            }
             \lElse{move $\DE$; place tile; move $\UW$; pickup tile}

             \uIf{$p + \SE, p + \NE \in \tiled{}$ and $p + \S \notin \tiled{}$}{
                 move $\SE$; \lWhile{$\{p + \UW, p + \USE, p + \SW, p + \S\} \cap \tiled{} = \{p + \S\}$}{move $\S$}
            }
             \lElse{move to tile at $\S, \SE$ or $\NE$ }

        }
         \uElseIf{$\tiledNeighb{}(p) = \{p + \DNW, p + \S, p + \NE\}$}{
             \lWhile{$\{p + \X \mid \X \in \{\UW,\USE, \SW, \DSW, \SE, \DE, \S\}\} \cap \tiled{} = \{p + \S\}$}{move $\S$}
             \If{$\{p+ \X \mid \X \in \{\UW, \USE, \SW\}\} \cap \tiled{} = \emptyset$}{
                 \lIf{$p + \DE \in \tiled{}$}{move $\N$}
                 move $\DE$; place tile; move $\UW$; pickup tile\;
                 \lWhile{$p + \N \in \tiled$}{move $\SE + \DNW$; place tile; move $\UW$; pickup tile}
                 move $\NE$\;
            }
        }
         \ElseIf{$\tiledNeighb{}(p) = \{p + \DNW, p + \S\}$}{
             \lWhile{$\{p + \X \mid \X \in \{\UW,\USE, \SW, \SE, \DE, \S\}\} \cap \tiled{} = \{p + \S\}$}{move $\S$}
             \If{$\{p+ \X \mid \X \in \{\UW, \USE, \SW\}\} \cap \tiled{} = \emptyset$}{
                 \lIf{$\{p+ \X \mid \X \in \{\SE, \DE\}\} \cap \tiled{} = \emptyset$}{move $\N$; run \proj{}}
                 \Else{
                    ... \Comment*[r]{same as lines 26--28}
                    \setcounter{AlgoLine}{37}
                     \While{$p \notin \tiled$}{
                         move $\S$; \lIf{$p + \SE \in \tiled{}$}{move $\SE$}
                         
                    }
                }
            }
        }
    }
\end{algorithm}

In \cref{alg:parallelogram}, the agent traverses a column in the $\S$ direction in lines 12--21 and the next column in the $\N$ direction in lines 26--31.
Following the check for whether the empty node above the next column should be tiled (line 32), the agent recursively executes \buildPar{} starting at the $\N$-most node of the next column (line 33).
The procedure may return with the agent being in various states, such as positioned on a tiled or empty node, with or without a tile.
In the main loop of the algorithm, \buildPar{} is executed repeatedly, and distinctions between these states are made to either terminate (line 4), retrieve the tile at which \buildPar{} was previously entered (lines 5--11), or enter the search phase (line 2).

In \cref{alg:icicle}, lines 16--22 are dedicated to handling \cref{case:b}, where a tile is shifted in the $\DE$ direction to maintain connectivity.
Lines 23--29 cover \cref{case:d}, and lines 30--40 cover \cref{case:e}.
These cases involve shifting multiple tiles of a column in the $\DE$ direction.
For concise pseudocode, the handling of \cref{case:a} is delegated to the \buildPar{} procedure.
This is because, in that procedure, $v + \SE$ is tiled in that particular case.
Additionally, within this context, we have integrated the check for a tile at $v + \SE + \DSW$ from \cref{case:c} (refer to lines 4–-5).

\begin{algorithm}[!ht]
    \caption{}
    \label{alg:project}
    \nonl\textbf{procedure} \proj\;
     \uIf(\Comment*[f]{parallelogram of height one}){$p + \N, p + \S \notin \tiled$}{
         \Do{$p + \UNE \in \tiled{}$}{
             \lWhile{$p \in \tiled{}$}{move $\DSW$}
             place tile; \lWhile{$p + \UNE \in \tiled{}$}{move $\UNE$}
             pickup tile\;
             \lIfElse{$p + \NW \in \tiled{}$}{move $\SW$; move $\DNW$}{move $\DSW$; \Return}
        }
    }
     \Else{
         \Do{$p \in \tiled{}$}{
             \lWhile{$p + \N \in \tiled{}$}{move $\N$}
             \lWhile{$p \in \tiled{}$}{move $\DSW$}
             place tile; \lWhile{$p + \UNE \in \tiled{}$}{move $\UNE$}
             pickup tile\;
             \lIf{$p + \S \in \tiled{}$}{move $\S$}
             \lElseIf{$p + \NW \in \tiled{}$}{move $\NW$}
             \lElse{move $\DSW$; \Return}
        }
    }
\end{algorithm}

\end{document}